\documentclass{article}

\usepackage{arxiv}

\usepackage[utf8]{inputenc} 
\usepackage[T1]{fontenc}    
\usepackage{hyperref}       
\usepackage{url}            
\usepackage{booktabs}       
\usepackage{amsfonts}       
\usepackage{nicefrac}       
\usepackage{microtype}      
\usepackage{lipsum}		
\usepackage{graphicx}
\usepackage{natbib}
\usepackage{doi}

\usepackage{amsmath}
\usepackage{amssymb}
\usepackage{graphicx}
\graphicspath{{./images/} }
\usepackage{multirow}
\usepackage{xcolor}
\usepackage[all]{xy}
\usepackage{csquotes}

\usepackage{float}

\usepackage[T1]{fontenc}
\usepackage{natbib}

\usepackage{enumitem}
\usepackage{bm}
\usepackage{bbm}

\usepackage{tikz}
\usetikzlibrary{positioning}
\usepackage{subcaption}

\usepackage{multirow}
\usepackage{graphicx}
\usepackage{dsfont}
\usepackage{url}
\usepackage[]{appendix} 

\usepackage{amsthm}
\newtheorem{theorem}{Theorem}
\newtheorem{proposition}{Proposition}
\newtheorem{definition}{Definition}

\newcommand{\ci}{\mbox{\protect $\: \perp \hspace{-2.3ex}\perp$ }}
\newcommand{\nci}{\mbox{\protect $\: \perp \hspace{-2.3ex}\perp $} \hspace{-2mm}/ \hspace{1.5mm}}

\title{\normalfont Causal invariance in graphical models with latent variables}

\date{} 					

\author{ Marco Borriero \\
	Department of Statistics, Computer Science \\ and Applications,\\
	University of Florence,\\
    Viale Giovanni Battista Morgagni 59, 50134 \\ Florence, Italy,\\
	\texttt{marco.borriero@unifi.it} \\
	\And
	  Monia Lupparelli \\
	Department of Statistics, Computer Science \\ and Applications,\\
	University of Florence,\\
    Viale Giovanni Battista Morgagni 59, 50134 \\ Florence, Italy,\\
	\texttt{monia.lupparelli@unifi.it} \\
	\AND
	Giovanni M. Marchetti \\
	Department of Statistics, Computer Science \\ and Applications,\\
	University of Florence,\\
    Viale Giovanni Battista Morgagni 59, 50134 \\ Florence, Italy,\\
	\texttt{giovanni.marchetti@unifi.it} \\
	\And
	Veronica Vinciotti \\
	Department of Mathematics, \\
    University of Trento,\\
    Via Sommarive 14, 38123 \\ Povo (TN), Italy \\
    \texttt{veronica.vinciotti@unitn.it}
}

\renewcommand{\headeright}{}
\renewcommand{\undertitle}{}
\renewcommand{\shorttitle}{Causal invariance in graphical models with latent variables}

\hypersetup{
pdftitle={A template for the arxiv style},
pdfsubject={q-bio.NC, q-bio.QM},
pdfauthor={David S.~Hippocampus, Elias D.~Striatum},
pdfkeywords={First keyword, Second keyword, More},
}

\begin{document}
\maketitle

\begin{abstract} 
Causal discovery aims to identify causal relationships among variables  from observational or interventional data,
typically represented by a  directed acyclic graph (DAG). The causal invariance principle enables the identification  of the causal parents of target variables by exploiting the stability of causal effects across different experimental settings. When some  parents are unobserved, however, the induced  graph  over the observed  variables may no longer be a DAG, and it may not be unique, complicating causal inference.  
For relevant configurations of latent parents, we characterize  the induced graph and formalize the conditions 
under which  causal invariance is preserved for the identification of the observed parents. 
Necessary and sufficient conditions for testing such invariance are formally established for a multivariate Gaussian target.
\end{abstract}

\keywords{acyclic directed mixed graphs \and hidden confounders \and identifiability \and mediation analysis}

\section{Introduction}

Causal directed acyclic graphs (DAGs) are extensively used to represent the structural causal model of data generating processes. In these graphs, the nodes correspond to random variables and causal relationships are encoded by directed edges, where $X \rightarrow Y$ implies that $X$ is  a causal parent of $Y$ \citep{pearl2009causality}.
Identifying the set of causal parents $X_{PA}$ of a target  $Y$ requires an inferential procedure known as causal discovery. 
Invariant Causal Prediction (ICP), introduced by \citet{peters}, is a recent method for parent search  based on the idea that causal effects remain invariant 
across different experimental settings,  referred to as environments.  Given data sampled from these settings,
the method  consists in testing the invariance of the conditional distribution of $Y$ given $X_S$ across environments for all  subsets $X_S$ of the predictor set $X_V=\{X_i\}_{i=1}^p$. When all causal parents are observed,  the intersection of the subsets  $X_S$ that satisfy the invariance hypothesis belongs to  $X_{PA}$ with controllable probability. 

In the absence of causal sufficiency, that is when some parents   are unobserved in the generating DAG, testing causal invariance  may fail because  hidden  variables induce distortions, confounding, and  identifiability issues. 
Related approaches, such as causal Dantzig  \citep{rothenhausler2019causal}, provide closed-form estimators of causal effects under the more restrictive setting of structural equation models with additive perturbations and allow correlated independent errors  for potential hidden variables.  \citet{long2023estimating}  extend this method by introducing  efficient hybrid estimators that make use of instrumental variables for model identifiability. Alternative approaches focus on robust prediction methods under distributional shifts \citep{gnecco2026boosted,henzi25,rothenhausler2021anchor}.
The impact of latent structures in testing causal invariance for parent identification requires further investigation as pointed out  recently by \citet{gnecco2026boosted}. 

We approach this issue by investigating whether  the    identification of the  parent set can be  recovered by the graphical  model induced over the observed variables. More specifically, the general question behind this paper is  whether the   graph  induced over the observed variables  enables one to specify the structural causal model of a target $Y$  after integrating out the latent variables.  This is an ambitious question as characterizing the    
graphical causal model for the observed  variables is far from straightforward. 
The induced graph may no longer be  a DAG and mixed graphs with further bi-directed edges are often  needed to encode the intricate  dependence structures arising   after marginalizing over the latent variables \citep{SadLau-2014}.
Additionally, depending on the graphical marginalization criterion adopted \citep{Zhang, richardson2003}, the induced graph and the resulting parent sets for $Y$ may not be unique, thereby complicating 
 causal parent identification.
 
In this paper, we provide an answer to this question under certain latent configurations. In particular, we consider closely the cases when unobserved parents  act either as a \emph{transition node}, $X \rightarrow H \rightarrow Y$, or as a \emph{source node}, $X \leftarrow H \rightarrow Y$ along the causal path, which are the cases of hidden mediator 
 and hidden counfounder, respectively (first row of Figure \ref{fig:DAG-latent}). We first characterize  the  graphical model for the observed variables such that causal invariance is preserved  for the distribution of a target of interest conditional on the observed parents. Then, we develop 
a framework for testing such invariance when the target is a multivariate Gaussian variable; the univariate Gaussian target represents a special case. Empirical examples illustrate how  the method can recover causal parents under these latent structures. Extended proofs and technical details  are deferred to the Appendix.

\begin{figure}[t]
\begin{minipage}{0.22\textwidth}
\centering

\[
\xymatrix{
& H \ar[dl]   & Z \ar[d]\\
Y  & & X \ar[ul] \\}
\]
(a)
\end{minipage}
\hfill
\begin{minipage}{0.22\textwidth}
\centering

\[
\xymatrix{
& H \ar[dl]   &Z \ar[d]\\
Y  & & X \ar[ul] \ar[ll] \\}
\]
(b)
\end{minipage}
\hfill
\begin{minipage}{0.22\textwidth}
\centering

\[
\xymatrix{
& H \ar[dl] \ar[dr]  &Z \ar[d]\\
Y  & & X  \\}
\]
(c)
\end{minipage}
\hfill
\begin{minipage}{0.22\textwidth}
\centering

\[
\xymatrix{
& H \ar[dl] \ar[dr]  &Z \ar[d]\\
Y  & & X \ar[ll] \\}
\]
(d)
\end{minipage}
\begin{minipage}{0.22\textwidth}
\centering
\vspace{0.3cm}
\[
\xymatrix{
Y & X \ar[l] & Z \ar[l]
}
\]
(e)
\end{minipage}
\hfill
\begin{minipage}{0.22\textwidth}
\centering
\vspace{0.3cm}
\[
\xymatrix{
Y & X \ar[l] & Z \ar[l]
}
\]
(f)
\end{minipage}
\hfill
\begin{minipage}{0.22\textwidth}
\centering
\vspace{0.3cm}
\[
\xymatrix{
Y \ar@{<->}[r] & X & Z \ar[l]
}
\]
(g)
\end{minipage}
\hfill
\begin{minipage}{0.22\textwidth}
\centering
\vspace{0.3cm}
\[
\xymatrix{
Y & X \ar[l] \ar@/^1pc/@{<->}[l] & Z \ar[l]
}
\]
(h)
\end{minipage}
\caption{First row: DAGs with latent parent  configurations in a simple setting of three observed variables $Y,X,Z$; latent variable $H$ is   a transition node in graphs (a)-(b) and a source node in graphs (c)-(d). Second row: corresponding induced mixed graphs over the observed variables.  }
\label{fig:DAG-latent}
\end{figure}

\section{Preliminaries}\label{sec.preliminary}
\subsection{Invariance condition of causal model}

Let us consider a DAG $\mathcal{D}=(V,E)$, representing the generating process for a set $X_V=\{X_i\}_{i \in V}$ of  random variables, where each  $X_i$ is associated to a node of the graph and $E$ is the set of directed edges. 
A  DAG  model 
is an independence model for    $X_V$ where conditional independences are implied by $d$-separation in the associated graph.
A causal DAG model, where the arrows imply causal relationships, is defined by an 
additional assumption of invariance, or stability, of the conditional distribution of each node $X_i$ given its causal parents $X_{PA_i}$ under an intervention on all or a subset of $X_{PA_i}$ \citep{pearl2009causality}. If we represent the model  via the structural causal model 
\begin{equation}\label{SEM}
X_i = f_i(X_{PA_i}, \delta_i), \quad  \forall i \in V,
\end{equation}
where $f_i$ is a deterministic function, $X_{PA_i}$ are the causal parents of $X_i$ in the graph 
and $\delta_i$ are independent random errors,  causal invariance refers to the invariance of $f_i$ and of the distribution of $\delta_i$ under intervention.  
We concentrate here on a target variable $Y \in X_V$ and on the identification of its causal parents   
$X_{PA}$. 
The invariance principle implies that if an external intervention is applied on, say, all causal parents  of $Y$, by setting these to values $x_{PA}$,  then
\begin{equation}
p(Y \vert \text{do}(X_{PA}=x_{PA}))=p(Y \vert X_{PA}=x_{PA}),
\end{equation}
that is, the distribution of $Y$ post-intervention, denoted using the \text{do}-operator, is the same as the conditional distribution of $Y$ when $X_{PA}$ are simply observed to take the value $x_{PA}$. %
The same definition can be given for  a  multivariate target $Y=\{Y_j\}_{j=1}^m$ where $X_{PA}= \cup_{j=1}^m X_{PA_j}$ and $X_{PA_j}$ is the parent set of $Y_j$.

Since external interventions are often unethical or unfeasible,  \citet{peters} and related approaches for general response types \citep{kook25,alice} exploit the heterogeneity of datasets across different environments $e \in \mathcal{E}$ to mimic  the effect of an external distributional intervention on $X_V$ without actually performing it. 
Under the assumption of no intervention on $Y$, they provide an inferential procedure for parent identification based on testing the invariance condition using the data collected in different environments. 

\begin{definition} \label{Invariance definition}
Consider a set of data $(Y^e, X^e)$  collected from different experimental settings $e \in \mathcal{E}$, where $Y^e$ is a (multivariate) target variable which is not subject to interventions and $X^e$ represents a set of predictors. Let $Y^e = f(X_{PA}^e, \delta_Y^e)$ be the structural equation for the target for each   $e \in \mathcal{E}$. The conditional distribution $p(Y \vert X_{PA})$ is invariant if: (i) $f$ is the same $\forall e \in  \mathcal{E}$, (ii) the distribution of $\delta_Y^e$ is the same for each $e \in \mathcal{E}$ and (iii) $\delta_Y^e \ci X_{PA}^e$.
\end{definition}
Intuitively, testing invariance  corresponds to verifying that the conditional distributions $Y^e \vert X_{PA}^e=x_{PA}^e$ and $Y^{e'} \vert X_{PA}^{e'}=x_{PA}^{e'}$ are equal for all environments $e,{e'} \in \mathcal{E}$, which mimic an external intervention at least on the parent set in order to ensure their correct  identification.

\subsection{ Causal model with latent  variables}
We now consider causal DAGs    in which
some of the causal parents of the target variables $Y$ are unobserved. When adequate a priori information is lacking and imposing probabilistic assumptions 
on the latent structures is too restrictive, working with the marginal distribution of the observed variables becomes a desirable alternative.
The graph for the observed variables  may no longer be a DAG, since this class is not closed under marginalization. A  larger class of  graphs is then needed. This class includes additional bi-directed edges to represent the  dependence induced between two observed variables by latent common causes. 

In particular, we consider the class of acyclic directed mixed graphs \citep{richardson2003} (henceforth, mixed graphs)  in which  nodes may  be linked by multiple directed and bi-directed edges that capture the effect of latent structures. Such graphs and related models have been employed to estimate causal effects in the presence of latent variables in \citet{henckel-al-2024}. 
As an example, the second row of Figure \ref{fig:DAG-latent} displays the  mixed graphs induced by  the  latent parent configurations  in the first row. When a transition variable $H$ is ignored (Figures \ref{fig:DAG-latent}(a) and (b)), as typical in mediation analysis, the induced mixed graph remains a DAG, since ignoring a transition node yields a directed edge between the observed variables. Note that $X$ becomes a parent of $Y$ in graph (e) even though it is an ancestor in the generating DAG (a). When the latent  parent is a source node (Figures \ref{fig:DAG-latent}(c) and (d)), the induced mixed graph is not a DAG.  In particular, ignoring $H$ in DAG (c) yields no directed relationship between  $Y$ and $X$ in the induced  graph (g), while in the mixed graph (f) the multiple edges between $Y$ and $X$ enable one to account for the confounding effect, distinguishing   the direct  effect of $X$ from the non-direct dependence induced by $H$ in DAG (d). 

Graphical mixed models are ruled by the $m$-separation criterion that generalizes the $d$-separation criterion for mixed graphs \citep{richardson2003}. The resulting model is a structural equation model with dependent errors and further dependences stemming from latent variables. Let $\mathcal{G}=(V, E)$ be a mixed graph associated with a random vector $X_V$, and let the model in \eqref{SEM} hold for any $X_i \in X_V$. Unlike DAGs, $\delta_i \nci \delta_j$  if  $X_i$ and $X_j$ are linked by  a bi-directed edge ($X_i \leftrightarrow  X_j$). Moreover, if $i \in PA_j$ such that the nodes are linked by multiple edges  $(X_i \mathrel{\substack{\longleftrightarrow \\[-0.5ex] \longrightarrow}} X_j)$, then $\delta_j \nci X_{PA_j}$. For instance, the structural equation model associated to the mixed graph in Figure \ref{fig:DAG-latent}(h) is
\begin{eqnarray*}
    Y = f_Y(X, \delta_Y), \quad X = f_X(Z, \delta_X), \quad Z =\delta_Z,  \qquad \delta_Y \nci \delta_X, \; \; \delta_Y \nci X.
\end{eqnarray*}
Instead, the model associated to the mixed graph in Figure \ref{fig:DAG-latent}(g) is defined by the equations
\begin{eqnarray*}
    Y =  \delta_Y, \quad X = f_X(Z, \delta_X), \quad Z =\delta_Z,  \qquad \delta_Y \nci \delta_X.
\end{eqnarray*}
The implications that marginalization over the latent variables has on causal invariance and, therefore, whether  causal discovery can still be conducted on the statistical model associated to the induced mixed graph are central questions to this paper.
In particular, given a causal DAG with  latent parents, we characterize the structural causal model for the target variable in the induced mixed graph and we formalize the testing of causal invariance for the identification of the observed parents when the target is Gaussian. Causal graphical models without and with hidden confounders are considered, respectively,  in Sections \ref{sec.SCM} and \ref{sec.SCMH}.

\section{Structural causal models without hidden confounders}\label{sec.SCM}
\subsection{Mixed graph model for the observed variables}
Let $X\in \mathbb{R}^p$ be a set of predictors, $H \in \mathbb{R}^q$ a set of latent variables and $Y \in \mathbb{R}^m$ a multivariate response whose parent sets of observed and unobserved variables are ${X}_{PA}$ and $H_{PA}$, respectively. The parents of $H_{PA}$ are denoted with $X_{PA(H)}$ instead.
Based on a data generating DAG mechanism, let the model of $Y$ and of the latent parents $H_{PA}$ follow the causal equations
\begin{equation}\label{eq:SCM-transition-source}
Y = f(X_{PA}, H_{PA}, \delta_Y), \quad  H_{PA} = g(X_{PA(H)}, \delta_H), \qquad \delta_{Y} \ci \{{X}_{PA}, H_{PA}\}, \; \;\; \delta_H \ci \delta_Y,
\end{equation}
with $\delta_{Y} \sim F_{\delta_{Y}}$
for some distribution $F_{\delta_{Y}}$.  
The latent variables in $H_{PA}$ act as transition and/or source node along the paths to $Y$ without confounding effects, such that $\varepsilon_Y \ci X_{PA}$. Interventions do not affect the unobserved variables, as well as the target variables. Therefore, the distributions of both $Y|X_{PA}$ and $H_{PA}|X_{PA(H)}$ are invariant, since (\ref{eq:SCM-transition-source}) is a structural causal model.

Marginalizing over $H$ leads to a mixed graph 
for the  observed variables where, ignoring a transition latent parent $H_l$ along the causal path to $Y_j$, induces a directed edge such that the  parents of $H_l$  become parents of  $Y_j$, while ignoring a source latent parent $H_l$ induces bi-directed edges linking its child nodes. Figures \ref{fig:DAG-latent}(a)-(c) and Figures \ref{fig:DAG-latent}(e)-(g) are illustrative of the marginalization criterion for a single target. 
No multiple edges arise in the induced graph since there are no confounding effects. Under these latent configurations, the induced  mixed graph is equivalent to an ancestral graph \citep{Zhang}, which is another notable class of graphs used to represent the marginal distribution of DAGs with hidden variables, using a different marginalization criterion.

The  model for  $Y$ given the observed variables corresponds to the structural equation model 
\begin{equation}\label{structural causal model without HC}
Y =\tilde{f}(X_{\widetilde{PA}},  \varepsilon_{Y}), \qquad \varepsilon_{Y} \ci {X}_{\widetilde{PA}},
\end{equation}
for an error term $\varepsilon_Y \sim F_{\varepsilon_y}$ and some function $\tilde{f}$ where $X_{\widetilde{PA}}=X_{PA} \cup X_{PA(H)}$ is the augmented parent set. The next proposition proves that the marginal model in \eqref{structural causal model without HC} is a structural causal model for $Y$
and the marginal mixed graph can be given a causal interpretation for the target $Y$.
\begin{proposition}\label{Proposition 1}
 The marginal model in \eqref{structural causal model without HC}  is a  structural causal model, then the conditional distribution of $Y \vert X_{\widetilde{PA}}$ is invariant. 
\end{proposition}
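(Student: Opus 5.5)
The plan is to build the marginal equation \eqref{structural causal model without HC} explicitly, by substituting the latent equation into the target equation, and then to check conditions (i)--(iii) of Definition \ref{Invariance definition} one at a time.

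\textbf{Construction of the marginal equation.} Substituting $H_{PA}=g(X_{PA(H)},\delta_H)$ into \eqref{eq:SCM-transition-source} gives
\[
Y = f\bigl(X_{PA}, g(X_{PA(H)},\delta_H), \delta_Y\bigr).
\]
We set $\varepsilon_Y=(\delta_H,\delta_Y)$, viewed as one composite error. We then define
\[
\tilde f(X_{\widetilde{PA}},\varepsilon_Y)= f\bigl(X_{PA}, g(X_{PA(H)},\delta_H),\delta_Y\bigr),
\]
which depends on $X$ only through $X_{\widetilde{PA}}=X_{PA}\cup X_{PA(H)}$.

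\textbf{Conditions (i) and (ii).} Interventions act on neither $Y$ nor $H$, so $f$ and $g$ are the same in every environment $e\in\mathcal{E}$. Hence $\tilde f$, being a composition of $f$ and $g$, is also the same in every environment, which gives (i). The laws of $\delta_Y$ and $\delta_H$ are invariant for the same reason. Because $\delta_H\ci\delta_Y$, the joint law of $\varepsilon_Y$ is the product of these two marginals, so $F_{\varepsilon_Y}$ is invariant as well, which gives (ii).

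\textbf{Condition (iii).} We must show $\varepsilon_Y\ci X_{\widetilde{PA}}$, and I expect this to be the main obstacle. It requires reading off from the generating DAG that the structural errors $\delta_H$ and $\delta_Y$ are exogenous with respect to the observed ancestors. The argument I would try first is the following.
\begin{enumerate}
\item $H_{PA}$ contains no hidden confounder, so $\delta_H$ is an exogenous error of a node whose observed parents are $X_{PA(H)}$. Under the DAG assumption of mutually independent structural errors, $\delta_H$ is independent of every non-descendant of $H$.
\item In the configurations considered, $X_{\widetilde{PA}}$ consists of non-descendants of $H_{PA}$: they are parents of $H$, or parents of $Y$ that are not children of $H$ because $H$ does not act as a confounder. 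This uses the stated assumption $\varepsilon_Y\ci X_{PA}$, i.e.\ no edge $H\to X_{PA}$. Therefore $\delta_H\ci X_{\widetilde{PA}}$.
\item $\delta_Y$ is independent of $\{X_{PA},H_{PA}\}$ by \eqref{eq:SCM-transition-source}. Since $Y$ is a sink relative to these variables, independence of errors gives $\delta_Y\ci(X_{\widetilde{PA}},\delta_H)$.
\item Combining the two independences, $\delta_Y\ci(X_{\widetilde{PA}},\delta_H)$ and $\delta_H\ci X_{\widetilde{PA}}$, the contraction property yields $(\delta_H,\delta_Y)\ci X_{\widetilde{PA}}$.
\end{enumerate}

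\textbf{Conclusion.} With (i)--(iii) established, \eqref{structural causal model without HC} is a structural causal model in the sense of Definition \ref{Invariance definition}. For every environment $e$ we then have
\[
p(Y^e\mid X^e_{\widetilde{PA}}=x)=\int \mathbb{1}\{\tilde f(x,\varepsilon)\in\cdot\}\,dF_{\varepsilon_Y}(\varepsilon),
\]
and the right-hand side does not depend on $e$. Hence the conditional distribution of $Y\mid X_{\widetilde{PA}}$ is invariant.

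The delicate point is step 2 of (iii). It must be checked that no element of $X_{PA}$ is a descendant of a latent source node $H$, since that is exactly what would create a confounding path. This is where the hypothesis of no hidden confounders is used essentially.
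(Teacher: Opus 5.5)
Your argument is correct, but it is organized differently from the paper's.

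\textbf{How the paper proceeds.} The paper does not keep $\delta_H$ in the error. It integrates it out, defining
\[
\tilde f(X_{\widetilde{PA}},\delta_Y)=\int f\bigl(X_{PA},g(X_{PA(H)},\delta_H),\delta_Y\bigr)\,p(\delta_H)\,d\delta_H .
\]
It argues that this map is the same in every environment because $f$, $g$ and the law of $\delta_H$ are, using $X_{PA}\ci\delta_H$. It then checks the remaining conditions for $\delta_Y$ alone, taking $\delta_Y\ci X_{\widetilde{PA}}$ from the hypotheses.

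\textbf{How your route differs.} You absorb $\delta_H$ into the composite error $(\delta_H,\delta_Y)$, so $\tilde f$ is a plain composition. The work then moves to showing that this composite error has an invariant (product) law and is jointly independent of $X_{\widetilde{PA}}$, via the non-descendant argument and contraction.

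\textbf{What each buys.} Your route is the tighter one. With your error, $Y=\tilde f(X_{\widetilde{PA}},\varepsilon_Y)$ holds pointwise. The paper's averaged $\tilde f$ is the mean of $Y$ over $\delta_H$ rather than $Y$ itself; in a linear model it drops the contribution of $\delta_H$ to the residual variance. Your composite error is also exactly the one the paper uses in its numerical examples, e.g.\ $\varepsilon_{Y_1}=\gamma_{Y_1,H_1}\delta_{H_1}+\gamma_{Y_1,H_2}\delta_{H_2}+\delta_{Y_1}$.

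\textbf{The price, and a point to tighten in step 2.} You need $\delta_H\ci X_{PA(H)}$, not only $\delta_H\ci X_{PA}$, and step 2 is slightly too quick here. Being a parent of $H_l$ makes a variable a non-descendant of $H_l$. With several latent components, however, it may still be a descendant of another $H_k\in H_{PA}$, as in $H_k\to X_a\to H_l\to Y\leftarrow H_k$, and then $\delta_{H_k}\nci X_a$. So you should state the non-descendance of all of $X_{\widetilde{PA}}$ from $H_{PA}$ as part of the no-confounding configuration; this is what $\varepsilon_Y\ci X_{\widetilde{PA}}$ in \eqref{structural causal model without HC} encodes. Also, ``no edge $H\to X_{PA}$'' should read ``no directed path from $H_{PA}$ to $X_{PA}$''.
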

The above result proves that ignoring  source variables does not induce causal relationships among the observed variables. On the other hand, when mediator/transition variables are not observed, the stability of the causal mechanism that links the augmented parents $X_{\widetilde{PA}}$ to the target $Y$ is preserved as no intervention is made on the latent variables.
In case of hidden confounding  such that $X_{PA} \nci \delta_H$ in \eqref{eq:SCM-transition-source}, Proposition \ref{Proposition 1} no longer holds, as will be discussed in detail in Section \ref{sec.SCMH}.

\subsection{Gaussian target variable}

Given the   mixed graph  and the related structural causal model \eqref{structural causal model without HC} for $Y$, the next Theorem provides a necessary and sufficient condition for testing causal invariance for any subset $X_S \subseteq X$  in a linear model with Gaussian target variables $Y$.

\begin{theorem}\label{Teorema 1}
Let $Y =B^T_{{PA}}X +  \varepsilon_{Y}$ be a linear structural equation for the target $Y$, for some coefficient matrix $B_{{PA}} \in \mathbb{R}^{p \times m}$
where $(B_{{PA}})_{ij}=0$ if and only if $X_i \not \in X_{\widetilde{PA}_j}$ with $X_{\widetilde{PA}_j}$ the set of observed parent variables of $Y_j$ in the induced mixed graph. Assume
$\varepsilon_{Y} \sim N(0, \Sigma)$ for some covariance matrix $\Sigma$ and $\varepsilon_{Y} \ci X_{\widetilde{PA}}$. Let $S_j \subset \{1,\dots,p\}$ be a set of potential causal parents for $Y_j \in Y$ with $S=\bigcup_{j=1}^{m}S_j$. Let $B \in \mathbb{R}^{p \times m}$ such that $B_{ij}=0$ if and only if $X_i \not \in S_j$ and that solves the expected likelihood score equations
\begin{equation}\label{ortogonalità}
\mathbb{E}_{{X}_{S},{Y}}[({Y} - B^T{X}){X}_{S}^T]=0.
\end{equation}
Then, $S=\widetilde{PA}$ and $B=B_{PA}$ almost surely if and only if $\mathbb{E}_{X,Y}[(Y - B^T X)(Y - B^T X)^T] = \Sigma$. 
\end{theorem}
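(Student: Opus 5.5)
Both directions go through the residual decomposition
\[
Y - B^T X = (B_{PA}-B)^T X + \varepsilon_Y .
\]
Write $\Delta = B_{PA}-B$. The population residual covariance is then
\[
\mathbb{E}[(Y-B^TX)(Y-B^TX)^T] = \Delta^T \mathbb{E}[XX^T]\Delta + \Delta^T\mathbb{E}[X\varepsilon_Y^T] + \mathbb{E}[\varepsilon_Y X^T]\Delta + \Sigma .
\]
The "only if" direction is immediate. If $S=\widetilde{PA}$ and $B=B_{PA}$, then $\Delta=0$ and the residual is exactly $\varepsilon_Y$, whose covariance is $\Sigma$. For consistency, note that $B_{PA}$ does solve \eqref{ortogonalità}: with $S=\widetilde{PA}$ we get $\mathbb{E}[\varepsilon_Y X_{\widetilde{PA}}^T]=0$, because $\varepsilon_Y \ci X_{\widetilde{PA}}$ and $\mathbb{E}\varepsilon_Y=0$. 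Under a positive-definite $\mathbb{E}[X_SX_S^T]$ this solution is unique.

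For the "if" direction, suppose the residual covariance equals $\Sigma$. Only the rows of $\Delta$ indexed by $\widetilde{PA}\cup S$ are nonzero, since $B$ is supported on $S$ and $B_{PA}$ on $\widetilde{PA}$. Hence $\Delta^T X = \Delta_{U}^T X_U$ with $U=\widetilde{PA}\cup S$.

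The key step is to use the score equations to kill the cross terms. The equations \eqref{ortogonalità} give $\mathbb{E}[(Y-B^TX)X_S^T]=0$, and the residual decomposes as $\varepsilon_Y + \Delta_U^T X_U$. I would multiply the score equation on the right by the rows $\Delta_S$ and account for the part of $\Delta$ supported on $\widetilde{PA}\setminus S$. The part supported on $\widetilde{PA}$ is orthogonal to $\varepsilon_Y$ by independence, and the part on $S$ is orthogonal to the full residual by \eqref{ortogonalità}. Writing $r=Y-B^TX$ and $r=\varepsilon_Y+\Delta^TX$, one computes
\[
\mathbb{E}[rr^T]=\mathbb{E}[r\varepsilon_Y^T]+\mathbb{E}[rX^T]\Delta .
\]
Here $\mathbb{E}[rX_S^T]\Delta_S=0$, and the remaining pieces involve $\varepsilon_Y$ only through $X_{\widetilde{PA}}$, which is independent of it. Rearranging should give
\[
\mathbb{E}[rr^T]-\Sigma = \mathbb{E}[(\Delta^TX)(\Delta^TX)^T] + (\text{terms that vanish}),
\]
or at least an identity of the form $\mathbb{E}[rr^T] = \Sigma + \Delta_{\widetilde{PA}}^T M \Delta_{\widetilde{PA}}$ with $M \succeq 0$. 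The intended reading is that $B^TX$ is a linear projection of $Y$: the residual variance is never below $\Sigma$ in the Loewner order when $X_S$ is exogenous, and a non-causal $S$ can only reach $\Sigma$ if $\Delta^TX=0$ almost surely.

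From $\mathbb{E}[(\Delta^TX)(\Delta^TX)^T]=0$ I then get $\Delta^TX=0$ a.s. Assuming $\mathbb{E}[XX^T]\succ 0$ (no deterministic linear relations among predictors), this forces $\Delta=0$, so $B=B_{PA}$. Because $(B_{PA})_{ij}\neq 0$ exactly for $i\in\widetilde{PA}_j$, and $B_{ij}=0$ exactly off $S_j$, the supports coincide and $S_j=\widetilde{PA}_j$, hence $S=\widetilde{PA}$. This uses the "if and only if" support assumptions stated in the theorem, which is where the faithfulness-type nonzero condition enters.

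The main obstacle is the cross term $\mathbb{E}[\varepsilon_Y X_{S\setminus\widetilde{PA}}^T]$. If $S$ contains descendants of $Y$, these predictors are correlated with $\varepsilon_Y$, and regressing on them can shrink the residual variance below $\Sigma$ rather than above it. The Loewner-order argument then fails as stated. I would handle this by arguing that, in the regime the theorem intends, the residual covariance under such an $S$ differs from $\Sigma$, so equality is still impossible rather than a clean inequality. Concretely, I would show $\mathbb{E}[rr^T]-\Sigma = \Delta^T\mathbb{E}[XX^T]\Delta - \mathbb{E}[\varepsilon_Y X^T]\Delta - \Delta^T \mathbb{E}[X\varepsilon_Y^T]$ and use Gaussianity of $\varepsilon_Y$ together with the score equation to reduce it to $-\,\mathrm{Cov}(\varepsilon_Y\text{ explained by }X_S)+\mathrm{Cov}(\text{omitted parent signal})$. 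The delicate point is ruling out exact cancellation between these two terms. I would appeal to genericity of the parameters, which is the sense of "almost surely" in the statement.
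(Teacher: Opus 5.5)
\textbf{Verdict: genuine gap.} Your route is correct and complete when every predictor in $S\cup\widetilde{PA}$ is uncorrelated with $\varepsilon_Y$. It leaves the endogenous case (the one you flag yourself) unproved.

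\textbf{Comparison with the paper.} The paper argues differently. It vectorizes the Gaussian log-likelihood, gets a zero expected score from the score equations and the information identity from the covariance condition, and concludes $Y\mid X\sim N(B^TX,\Sigma)$ from these Bartlett identities. It then gets ``almost surely'' from the remark that $\{B:\mathbb{E}[(Y-B^TX)(Y-B^TX)^T]=\Sigma\}$ is Lebesgue-null in $\mathbb{R}^{pm}$.

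\textbf{Where your argument works.} Suppose every $X_i$ with $i\in U=S\cup\widetilde{PA}$ is uncorrelated with $\varepsilon_Y$. Then $\mathbb{E}[rr^T]=\Sigma+\Delta^T\mathbb{E}[XX^T]\Delta$ for \emph{any} $B$ supported on $S$. Together with $\mathbb{E}[X_UX_U^T]\succ0$ and the exact-support hypotheses, this yields $B=B_{PA}$ and $S_j=\widetilde{PA}_j$. In this case your proof is more elementary than the paper's.

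\textbf{The gap.}
\begin{itemize}
\item Your claim that the remaining pieces ``involve $\varepsilon_Y$ only through $X_{\widetilde{PA}}$'' is false in general, because $\Delta^T\mathbb{E}[X\varepsilon_Y^T]$ involves $X_{S\setminus\widetilde{PA}}$.
\item The offending predictors are not only descendants of $Y$. They also include nodes sharing a latent parent with it: $X_4$ and $\varepsilon_{Y_2}$ are correlated through $H_3$ in Figure~\ref{Transition and source node}(b).
\item ``Genericity'' is invoked but not proved, and no argument can avoid it, because for particular parameter values the implication is false. Take $m=1$, $Y=X_1+\varepsilon_Y$, $X_2=Y+\delta$, with $X_1,\varepsilon_Y\sim N(0,1)$ and $\delta\sim N(0,2)$ independent. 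Then $S=\{2\}$ and $B=(0,1/2)^T$ have exact support $S$, and satisfy $\mathbb{E}[(Y-X_2/2)X_2]=2-2=0$ and $\mathbb{E}[(Y-X_2/2)^2]=1=\Sigma$. Yet $S\neq\{1\}=\widetilde{PA}$.
\end{itemize}

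\textbf{What closing the gap requires.} In this case the ``if'' direction is exactly a measure-zero statement, and you need three steps.
\begin{enumerate}
\item[(a)] Say what is generic, e.g.\ the coefficients and error variances of a linear Gaussian SEM over all of $X$, $H$, $Y$, including children of $Y$ and nodes sharing latent parents with it.
\item[(b)] Note that the score equations force $B$ to be the projection of $Y$ on $X_S$, so the entries of $\mathbb{E}[rr^T]-\Sigma$ are rational functions of these parameters.
\item[(c)] Show that for each $S\neq\widetilde{PA}$ these entries are not all identically zero, so the bad set is null.
\end{enumerate}
Step (c) splits into two cases.
\begin{itemize}
\item If $S\supseteq\widetilde{PA}$, then $\mathbb{E}[rr^T]=\Sigma-KG^{-1}K^T$ with $K=\mathbb{E}[\varepsilon_YX_S^T]$ and $G=\mathbb{E}[X_SX_S^T]$. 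Equality forces $K=0$ and hence $B=B_{PA}$, with no genericity needed.
\item If $S\not\supseteq\widetilde{PA}$, switch off all edges out of $Y$ and all edges out of its latent parents other than those into $Y$. This leaves $B_{PA}$ and $\Sigma$ unchanged and puts you in your exogenous case. There the difference has a strictly positive diagonal entry, because some $i\in\widetilde{PA}\setminus S$ has $\Delta_{ij}=(B_{PA})_{ij}\neq0$.
\end{itemize}

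The paper does not supply this step either. Its Bartlett identities involve only $(X_S,Y)$, and all of them hold in the example above (there $Y\mid X_2\sim N(X_2/2,1)$ exactly). So in that example its conclusion rests entirely on the final null-set remark.
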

Theorem \ref{Teorema 1} can be exploited in order to build an inferential procedure for recovering the parent set of a given target variable. 
The proposed procedure considers all possible subsets  $X_S \subseteq X$, and for each of these, tests the invariance condition 
\begin{equation} \label{H0}
H_0 : \Sigma_1=\dots=\Sigma_k
\end{equation}
from data across the $k$ environments $e \in \mathcal{E}$.
To give an empirical illustration of this procedure, we employ Box's M test using the test statistic 
\[M = (n-k)\log |\hat{\Sigma}| - \sum_{e=1}^{k}(n_e-1) \log  |\hat{\Sigma}_e|,\]
with $\hat{\Sigma}^e$ the unbiased $m\times m$ sample covariance  matrix of the residuals of a multivariate linear regression model predicting the response using data from environment $e$ and $\hat{\Sigma}= \frac{1}{n-k}\sum_{e=1}^{k}(n_e - 1) \hat{\Sigma}_e$ the pooled covariance, with $n_e$ denoting the sample size of environment $e$ and $n=n_1+\dots+n_k$ the total sample size. Setting $c=\frac{2m^2 + 3m -1}{6(m+1)(k-1)}\left(\sum_{e=1}^{k}\frac{1}{n_e -1}- \frac{1}{n-k} \right)$, under the null hypothesis, 
\[M(1-c) \overset{H_0}{\sim} \chi^2_{\nu}, \quad \nu = (k-1)m(m+1)/2.\] 
At the significant level $\alpha$, we expect the invariance condition \eqref{H0} to be accepted for more than one subset $S$, as  non-predictive variables can be included in any invariant set while preserving invariance. For this reason, among all sets for which the null hypothesis cannot be rejected, we select the one that minimizes the  average Bayesian Information Criterion (BIC)  across environments. %

\begin{figure}
\begin{minipage}{0.45\textwidth}
\centering 
\begin{tikzpicture}[node distance={15mm}, thick, main/.style = {}] 
\node[main, text=red] (y1) {$Y_1$}; 
\node[main, text=red] (y2) [below of=y1]{$Y_2$}; 
\node[main, text=gray] (h2) [below left = 0.25cm and 0.3cm of y1] {$H_2$};
\draw[->, gray] (h2) -- (y1);
\draw[->, gray] (h2) -- (y2);
\node[main, text=gray] (h1) [right of=y1]{$H_1$};
\draw[->, gray] (h1) -- (y1);
\node[main] (x3) [right of=h1]{$X_3$};
\draw[->, gray] (x3) -- (h1);
\node[main, text=gray] (h3) [below of=h1]{$H_3$};
\draw[->, gray] (h3) -- (y2);
\node[main] (x1) [below right = 0.25cm and 0.3cm of x3]{$X_1$};
\draw[->] (x1) to[bend right=20] (y2);
\node[main] (x2) [right of=h3]{$X_2$};
\draw[->, gray] (x2) -- (h3);
\draw[->] (x1) -- (x2);
\node[main] (x5) [above = 0.5cm of h1]{$X_5$};
\draw[->] (x5) -- (y1);
\draw[->] (x1) -- (x2);
\draw[->] (x1) to[bend right=30] (x5);
\node[main] [left of=y1](x7) {$X_7$}; 
\node[main] [above = 0.5cm of x7](x6) {$X_6$};
\draw[->] (y1) -- (x7);
\draw[->] (y1) -- (x6);
\node[main] [left of=y2](x8) {$X_8$};
\draw[->] (y2) -- (x8);
\node[main] [below = 0.5cm of h3](x4) {$X_4$};
\draw[->, gray] (h3) -- (x4);
\draw[->] (x2) -- (x4);
\end{tikzpicture}\\
\centering (a)
\end{minipage}\hfill
\begin{minipage}{0.45\textwidth}
\centering 
\begin{tikzpicture}[node distance={15mm}, thick, main/.style = {}] 
\node[main, text=red] (y1) {$Y_1$}; 
\node[main, text=red] (y2) [below of=y1]{$Y_2$}; 
\draw[<->, red] (y1) -- (y2);
\node[main] (x3) [right of=y1]{$X_3$};
\draw[->] (x3) -- (y1);
\node[main] (x1) [below right = 0.25cm and 0.3cm of x3]{$X_1$};
\draw[->] (x1) to[bend right=20] (y2);
\node[main] (x2) [right of=y2]{$X_2$};
\node[main]  [below = 0.5cm of x2](x4) {$X_4$};
\draw[<->] (x4) -- (y2);
\draw[->] (x2) -- (x4);
\draw[->] (x2) -- (y2);
\draw[->] (x1) -- (x2);
\node[main] (x5) [above =0.5cm of x3]{$X_5$};
\draw[->] (x5) -- (y1);
\draw[->] (x1) to[bend right=30] (x5);
\node[main] [left of=y1](x7) {$X_7$}; 
\node[main] [above = 0.5cm of x7](x6) {$X_6$};
\draw[->] (y1) -- (x6);
\draw[->] (y1) -- (x7);
\node[main] [left of=y2](x8) {$X_8$};
\draw[->] (y2) -- (x8);
\end{tikzpicture}\\
\centering (b)
\end{minipage}
\caption{(a) Generating DAG with latent transition ans source nodes; (b) induced mixed graph.}
\label{Transition and source node}
\end{figure}

We provide a numerical example to demonstrate that  the causal parents of a  target variable  in an induced mixed graph are effectively identified when applying the result of Theorem \ref{Teorema 1} to test for invariance.
Let us consider the data generating model  for a bivariate Gaussian target $Y=\{Y_1,Y_2\}$ represented by the causal DAG  in Figure \ref{Transition and source node}(a) including eight observed variables $X=\{X_1,\dots, X_8\}$ and  three latent variables $H=\{H_1,H_2,H_3\}$. Specifically, $H_1$ is a latent transition parent for the outcome $Y_1$; $H_3$ acts simultaneously as a latent transition parent for $Y_2$ and a source node for $X_4$ and $Y_2$; $H_2$ is a common source node for the two outcomes. The induced mixed graph over the observed variables, shown in Figure \ref{Transition and source node}(b), includes the bi-directed edges $Y_1 \leftrightarrow Y_2$ and  $X_4 \leftrightarrow Y_2$ that are non-causal relationships, and directed edges $X_3 \rightarrow Y_1$, $X_2 \rightarrow Y_2$, so that $X_2,X_3$ enter the augmented causal parent set of $Y_1,Y_2$. 
Section \ref{app:numerical1} of the Appendix provides a detailed description of the structural causal model and the generation of data from two environments, which mimic multiplicative interventions. In each environment, data were simulated with a sample size of $n=500$.
The  parent set $\{X_1, X_2,X_3, X_5\}$ of the joint outcome $\{Y_1, Y_2\}$ in Figure  \ref{Transition and source node}(b) is correctly identified $92\%$ of times across $50$ Monte Carlo replications using a 5\% significance level for the tests.

\section{Structural causal models with hidden confounders}\label{sec.SCMH}
\subsection{Mixed graph model for the observed variables}
 We frame the proposed procedure within the setting of causal models with hidden confounders, which typically make causal effects non-identifiable. Given the variable structure described in Section \ref{sec.SCM}, we further  assume that a set of instrumental variables $Z \in \mathbb{R}^r$  are available to ensure model identifiability in correspondence of confounding effect \citep{angrist1996}. 
Based on a generating DAG, we consider the  structural causal model  for $Y$, $H_{PA}, X_{PA}$
\begin{equation}\label{structural causal model}
Y = f(X_{PA}) + g(H_{PA}, \delta_{Y}), \qquad H_{PA} = \delta_H, \qquad X_{PA} = l(H_{PA}, Z, \eta) 
\end{equation}
with $\delta_Y \ci \{X_{PA}, H_{PA} \}$, $\delta_H \ci \delta_Y$, $\eta \ci \{H, Z \}$, 
 $Y \ci Z \vert    X_{PA}$,  $\mathbb{C}ov(Z, X_{PA}) \not = 0$
and 
$\delta_Y \sim F_{\delta_Y}$ for some
distribution $ F_{\delta_Y}$. 
As  interventions do not affect  $\{H_{PA},Y\}$, the distribution of $Y|\{X_{PA}, H_{PA}\}$ is invariant. 
The set $H_{PA}$ may involve  confounding. All latent parents are assumed to be exogenous ($X_{PA(H)}= \emptyset)$, consequently 
there are no latent parents acting as transition nodes for $Y$.
If we marginalize over $H_{PA}$, we get a mixed graph $\mathcal{G}$  with  additional bi-directed edges linking $Y_j$ and its parents $X_i \in X_{PA_j}$ in correspondence of confounding effect. Figures \ref{fig:DAG-latent}(d) and (h)  are illustrative  of for the case of  a single target. The mixed graph includes multiple edges to distinguish between the direct effect of parent variables on a target and the further confounding effect. The induced model for $Y$ given the observed variables is the structural equation model 
\begin{equation}\label{structural causal model with HC}
Y =f(X_{PA}) +  \varepsilon_{Y}, \quad X_{PA}= \tilde{l}(Z, \eta), \quad \varepsilon_{Y} \nci X_{PA}, 
\end{equation}
with an error term $\varepsilon_Y \sim F_{\varepsilon_Y}$.\
In this setting, we prove that in the  induced mixed graph,  parents of $Y$ are causal parents 
under the weak invariance principle,  and that the structural equation in \eqref{structural causal model with HC} is a  causal model for a target $Y$. 
We recall the definition of weak invariance \citep{peters}. 
\begin{definition}\label{Definition 2}
If the multivariate random variable $Y$ follows the causal equation $Y^e = f(X_{PA}^e, \varepsilon_Y^e)$ for all  $e \in \mathcal{E}$, then we say that the distribution of $Y\vert X_{PA}$ satisfies the weak invariance property if conditions (i) and (ii) of Definition \ref{Invariance definition} hold.
\end{definition} 
Intuitively, the weak invariance condition represents a softer form of regularity that holds even if hidden confounders are present. We  exploit this definition in order to give a causal interpretation to marginal mixed graph models also when this type of latent configuration occurs.

\begin{proposition}\label{Proposition 2}
 The marginal model in \eqref{structural causal model with HC}  is a  structural causal model, then the conditional distribution of $Y \vert X_{PA}$ is weakly invariant. 
\end{proposition}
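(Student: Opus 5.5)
The plan is to build the marginal model \eqref{structural causal model with HC} explicitly from the generating model \eqref{structural causal model}. Then, environment by environment, I will check conditions (i) and (ii) of Definition \ref{Invariance definition}, which is what Definition \ref{Definition 2} requires. Condition (iii) must be dropped, and I will say so explicitly.

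\textbf{Step 1: define the induced error.} Set $\varepsilon_Y := g(H_{PA}, \delta_Y)$. Because $H_{PA}=\delta_H$, we have $\varepsilon_Y = g(\delta_H,\delta_Y)$, so the marginal equation reads $Y = f(X_{PA}) + \varepsilon_Y$. The error depends on $X_{PA}$ through the shared $H_{PA}$ in $X_{PA}=l(H_{PA},Z,\eta)$, which gives $\varepsilon_Y \nci X_{PA}$ and accounts for the bi-directed edges in $\mathcal{G}$. The remaining equation is $X_{PA}=\tilde l(Z,\eta)$. Here $H_{PA}$ is absorbed, so $\tilde l$ and the new noise are formally a reparametrisation, e.g.\ $\tilde\eta=(H_{PA},\eta)$. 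This is what makes the marginal system a structural equation model on the observed variables in the sense of the mixed-graph SEM of Section \ref{sec.preliminary}.

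\textbf{Step 2: weak invariance across $e\in\mathcal{E}$.} By assumption, interventions act only on $X_{PA}$, through $Z$, $\eta$ or $l$, and never on $\{H_{PA},Y\}$. Hence the functions $f$ and $g$ are the same in every environment, and the laws of $\delta_Y$ and $\delta_H$ are environment-free.
\begin{itemize}
\item Condition (i): the structural function $(x,\varepsilon)\mapsto f(x)+\varepsilon$ does not depend on $e$.
\item Condition (ii): $\varepsilon_Y^e = g(\delta_H^e,\delta_Y^e)$ is a fixed measurable map of $(\delta_H^e,\delta_Y^e)$. Since $\delta_H\ci\delta_Y$ with invariant marginals, their joint law is invariant, so the law of $\varepsilon_Y^e$ is the same for all $e$.
\end{itemize}
Condition (iii) fails precisely because $X_{PA}^e$ depends on $H_{PA}^e$. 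The conclusion is therefore weak, not full, invariance of $Y\mid X_{PA}$ in the sense of Definition \ref{Definition 2}.

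\textbf{Step 3: causal meaning.} To justify calling \eqref{structural causal model with HC} a structural causal model, I will show that $p(Y\mid \mathrm{do}(X_{PA}=x))$ is the law of $f(x)+\varepsilon_Y$. This holds because a do-intervention on $X_{PA}$ cuts the edges from $H_{PA}$ and $Z$ without altering the law of $(\delta_H,\delta_Y)$. The mechanism $f$ is therefore the causal effect, even though it differs from $\mathbb{E}[Y\mid X_{PA}]$.

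\textbf{Main obstacle.} I expect the delicate point to be condition (ii). I must make sure that the environments do not modify the distribution of $H_{PA}$, which the paper's assumption that interventions do not affect $\{H_{PA},Y\}$ guarantees. I also need the additive separability of $f$ and $g$ in \eqref{structural causal model}, so that the confounded part is a function of $(H_{PA},\delta_Y)$ alone. Without additivity, $g$ would interact with $X_{PA}$ and the error would not be well defined independently of $x$. I will state explicitly that this additive structure is used here, and that the instrument assumptions $Y\ci Z\mid X_{PA}$ and $\mathbb{C}ov(Z,X_{PA})\neq0$ are not needed for this proposition but only later for identifiability.
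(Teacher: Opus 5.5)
Your proposal is correct and follows the paper's proof essentially step for step. The paper also uses $H_{PA}=\delta_H$ to set $\varepsilon_Y := g(\delta_H,\delta_Y)$, obtains (i) from the invariance of the mechanism for $Y$, and obtains (ii) because $g$ is fixed and $\delta_H \ci \delta_Y$ have environment-free laws; your Step 3 do-intervention check and your explicit note that (iii) fails are correct extras the paper does not include.
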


Note that the induced mixed graph for a DAG including confounding effects may be not equal to the induced ancestral graph \citep{Zhang}. In fact, given the  DAG in Figure \ref{fig:DAG-latent}(d), the induced ancestral graph does not have the bi-directed edge between $X$ and $Y$. Instead, it has  a  directed arrow from $Z$ to $Y$, making $Z$ an additional parent of the target variable. Even though the ancestral and mixed graph models are equivalent in terms of independences,  the next proposition shows that the ancestral graph model is not invariant for $Y$. The proof in Section \ref{appendix A} of the Appendix is complemented with a brief discussion on the comparison between the induced mixed and ancestral graph model.
\begin{proposition}\label{prop:ancestral}
Consider the causal DAG in Figure \ref{fig:DAG-latent}(d). In the induced ancestral graph obtained by ignoring the latent variable $H$,  
the weak invariance property does not hold  for $p(Y \vert X, Z)$.
\end{proposition}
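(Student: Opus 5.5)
The plan is to write the DAG in Figure \ref{fig:DAG-latent}(d) as a linear Gaussian structural causal model and compute $p(Y\vert X,Z)$ explicitly in each environment. This is the model implied by the ancestral graph, where $Z$ is a parent of $Y$. I take $H=\delta_H$, $X=\gamma Z+\lambda H+\eta$ and $Y=\beta X+\theta H+\delta_Y$. The noises are independent and Gaussian with variances $\sigma_H^2,\sigma_\eta^2,\sigma_Y^2$. I also assume $\theta\neq 0$, $\lambda\neq 0$ (so the confounding is genuine) and $\gamma\neq0$, in line with the instrument assumption $\mathbb{C}ov(Z,X)\neq 0$.

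Environments will act as interventions on the observed non-target variables. The simplest choice is to let the distribution of $Z$ change, say $\mathrm{Var}(Z^e)=\tau_e^2$. Optionally the law of $\eta^e$ can change too. In every case $H$, $\delta_Y$ and the equation for $Y$ stay fixed, as the model \eqref{structural causal model} requires.

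The key computation is $\mathbb{E}[H\vert X,Z]$. Given $Z$, the pair $(H,X)$ is jointly Gaussian, so
\[
\mathbb{E}[H\vert X,Z]=\frac{\lambda\sigma_H^2}{\lambda^2\sigma_H^2+\sigma_\eta^2}\,(X-\gamma Z).
\]
Hence the ancestral-graph equation is $Y=b_X X+b_Z Z+\tilde\varepsilon$. Here $b_X=\beta+\theta\kappa$ and $b_Z=-\theta\kappa\gamma$, with $\kappa=\lambda\sigma_H^2/(\lambda^2\sigma_H^2+\sigma_\eta^2)$. The residual $\tilde\varepsilon$ is independent of $(X,Z)$ and has variance $\sigma_Y^2+\theta^2\sigma_H^2\sigma_\eta^2/(\lambda^2\sigma_H^2+\sigma_\eta^2)$.

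Under a shift of $Z$ alone, these quantities do not change, so $p(Y\vert X,Z)$ would actually be invariant. Therefore the argument must use an intervention on $X$ itself, for example on its noise. This is the main obstacle: choosing an admissible class of environments under which the ancestral model fails, while the mixed-graph model of Proposition \ref{Proposition 2} keeps weak invariance. The choice I will commit to is an intervention changing $\mathrm{Var}(\eta^e)=\sigma_{\eta,e}^2$, or equivalently a multiplicative intervention on $X$, as in the paper's numerical examples. Then $\kappa_e$ depends on $e$, so the function $f^e(X,Z)=b_X^eX+b_Z^eZ$ varies across environments and condition (i) of Definition \ref{Invariance definition} fails. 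The residual variance also varies, so condition (ii) fails. Weak invariance (Definition \ref{Definition 2}) is therefore violated whenever $\theta\lambda\neq0$.

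To conclude, I will contrast this with the mixed graph of Figure \ref{fig:DAG-latent}(h). There $Y=\beta X+\varepsilon_Y$ with $\varepsilon_Y=\theta H+\delta_Y$, so both $f$ and the law of $\varepsilon_Y$ are unchanged across environments, even though $\varepsilon_Y\nci X$. A single counterexample suffices for the proposition, so I only need to exhibit explicit numerical values with $\sigma_{\eta,1}\neq\sigma_{\eta,2}$ that give $b_X^1\neq b_X^2$.
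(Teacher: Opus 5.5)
Your proposal is correct and essentially reproduces the paper's own counterexample: the same linear Gaussian SCM for Figure \ref{fig:DAG-latent}(d), with environments that differ in the noise variance of $X$, so that the ancestral-graph coefficient $\beta+\theta\lambda\sigma_H^2/(\lambda^2\sigma_H^2+\sigma_{\eta,e}^2)$ (the paper's $\beta_{yx\vert z}$) changes across environments. Your added remark that shifting only the law of $Z$ would leave $p(Y\vert X,Z)$ invariant is a useful clarification of why the intervention has to act on the noise of $X$.
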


In a more general setting where the latent parents can also act as transition nodes, we prove that causal invariance holds when the effect of  $H_{PA}$ on $Y$ is linear. Then, we consider the   model
\begin{equation}\label{structural causal
model version 2}
Y = f(X_{PA}) + \Gamma_{PA}^TH + \delta_{Y}, \quad H_{PA} = g(X_{PA(H)}) + \delta_H, \quad X_{PA} = l(H, Z, \eta)
\end{equation}
with  $\delta_Y \ci \{X_{PA}, H_{PA} \}$, $\delta_H \ci X_{PA(H)}$, $\delta_H \ci \delta_Y$, $\eta \ci \{H, Z \}$ and $\delta_Y \sim F_{\delta_Y}$.    $\Gamma_{PA} \in \mathbb{R}^{q \times m}$ is a coefficient matrix with $(\Gamma_{PA})_{lj}=0$ if and only if $H_l \not \in H_{PA_j}$. 
The induced model for $Y$ is 
\begin{equation}\label{structural causal model with HC version 2}
Y =f(X_{\widetilde{PA}}) +  \varepsilon_{Y}, \quad X_{\widetilde{PA}}= \tilde{l}(Z, \eta), \quad \varepsilon_{Y} \nci X_{\widetilde{PA}},
\end{equation}
where $\varepsilon_Y \sim F_{\varepsilon_Y}$ and the  parent set $X_{\widetilde{PA}}$ can be larger than $X_{PA}$ in case of latent transition parents.
\begin{proposition}\label{prop:endogenousHC}
The marginal model in \eqref{structural causal model with HC version 2}  is a  structural causal model, then the conditional distribution of $Y \vert X_{\widetilde{PA}}$ is weakly invariant.
\end{proposition}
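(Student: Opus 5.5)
The plan is to substitute the latent equation into the target equation, so that $Y$ is written as a function of observed variables plus a composite error, and then check conditions (i) and (ii) of Definition \ref{Invariance definition} for that error across environments. Condition (iii) is not needed for weak invariance (Definition \ref{Definition 2}).

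First I substitute $H_{PA} = g(X_{PA(H)}) + \delta_H$ into the equation for $Y$ in \eqref{structural causal model version 2}. Because the effect of $H_{PA}$ on $Y$ is linear, this gives
\begin{equation*}
Y = f(X_{PA}) + \Gamma_{PA}^T g(X_{PA(H)}) + \Gamma_{PA}^T\delta_H + \delta_Y .
\end{equation*}
I then define the augmented parent set $X_{\widetilde{PA}} = X_{PA}\cup X_{PA(H)}$, the function $\tilde f(X_{\widetilde{PA}}) = f(X_{PA}) + \Gamma_{PA}^T g(X_{PA(H)})$, and the error $\varepsilon_Y = \Gamma_{PA}^T\delta_H + \delta_Y$. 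This is the form \eqref{structural causal model with HC version 2}, with $f$ there read as $\tilde f$.

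Next I verify condition (i). By assumption, interventions across $e\in\mathcal{E}$ act only on the observed predictors and not on $\{H_{PA}, Y\}$. Hence the mechanisms $f$ and $g$ and the coefficient matrix $\Gamma_{PA}$ are the same in every environment, being structural equations of \eqref{structural causal model version 2}. So $\tilde f$ does not depend on $e$.

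Then I verify condition (ii). The error $\delta_Y$ has the fixed distribution $F_{\delta_Y}$, and $\delta_H$ is the exogenous noise of the latent equation, whose law is not affected by interventions. Since $\delta_H \ci \delta_Y$, the law of $\varepsilon_Y$ is the convolution of the law of $\Gamma_{PA}^T\delta_H$ with $F_{\delta_Y}$. This convolution is therefore the same in every environment. This step is where linearity is essential: with a general $g(H_{PA},\delta_Y)$, the error could not be separated additively from the observed parents $X_{PA(H)}$, because $H_{PA}$ depends on $X_{PA(H)}$.

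Finally I note that $\varepsilon_Y \nci X_{\widetilde{PA}}$ in general, since $X_{PA} = l(H,Z,\eta)$ depends on $\delta_H$ through $H$. So only weak invariance can be claimed, which mirrors the argument of Proposition \ref{Proposition 2}.

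The main obstacle is the equality of the law of $\varepsilon_Y$ across environments, specifically the need that interventions leave the distribution of $\delta_H$ unchanged even when they shift $X_{PA(H)}$. I would handle this by appealing explicitly to the modelling assumption that interventions do not act on latent variables. I would also use $\delta_H \ci X_{PA(H)}$, which makes the latent noise exogenous, so that shifting the distribution of $X_{PA(H)}$ does not alter it.
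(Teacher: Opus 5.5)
Your proposal is correct and follows the paper's proof. You substitute the latent equation into the target equation, set $\tilde f(X_{\widetilde{PA}}) = f(X_{PA}) + \Gamma_{PA}^T g(X_{PA(H)})$ and $\varepsilon_Y = \Gamma_{PA}^T\delta_H + \delta_Y$, and check conditions (i) and (ii), with the only difference being that you spell out the convolution argument for (ii) where the paper refers back to the proof of Proposition~\ref{Proposition 2}.
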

Given Proposition \ref{prop:endogenousHC}, invariance conditions are now derived for the case of a Gaussian target.

\subsection{Gaussian target variable}

We now specify the  structural causal model in \eqref{structural causal model version 2}
for a Gaussian target 
\begin{equation}\label{linear SCM}
Y= \Lambda_{PA}^TX + \Gamma_{PA}^TH + \delta_Y, \quad H_{PA} = \Phi_{PA}^TX + \delta_H,
\end{equation}
with $\delta_Y \sim N(0, \Omega)$ for a diagonal covariance matrix $\Omega$ and where the non-zero entries in the matrices $\Lambda_{PA}, \Gamma_{PA}, \Phi_{PA}$ identify the  parent sets $X_{PA}$, $H_{PA}$ and $X_{PA(H)}$.  
Given $B_{PA}^T = \Lambda_{PA}^T + \Gamma_{PA}^T\Phi_{PA}^T$ and $\varepsilon_Y = \Gamma_{PA}^T\delta_H + \delta_Y$,  the  causal equation for $Y$ ignoring $H_{PA}$ becomes
\begin{equation}\label{linear SCM with  HC}
    Y = B_{PA}^TX + \varepsilon_Y, \qquad \varepsilon_Y \sim N(0, \Sigma)
\end{equation}
where $\varepsilon_Y \nci X_{\widetilde{PA}}$ if any hidden confounder is present. The next theorem provides a necessary and sufficient condition for testing weak causal invariance for any subset $X_S \subseteq X$ in \eqref{linear SCM with  HC}.

\begin{theorem}\label{Teorema2}
In the above linear setting, let $S_j \subset \{1,\dots,p\}$ be a set of potential causal parents for $Y_j \in Y$ with $S=\bigcup_{j=1}^{m}S_j$. Let $B \in \mathbb{R}^{p \times m}$ such that $B_{ij}=0$ if and only if $X_i \not \in S_j$ and that solves the expected likelihood score equations
\begin{equation*} 
\mathbb{E}_{{X}_{S }, {Y}}[({Y} - B^T {X}) {X}_S^T]=  \Gamma_{PA}^T\mathbb{E}_{{X}_{S }, {\delta}_{H}}[\delta_H {X}_S^T], \; \;\;
\mathbb{E}_{{X}_{S }, {Y}, \delta_H}[({Y} - B^T {X}) \delta_H^T]= \Gamma_{PA}^T\mathbb{E}_{\delta_H}[\delta_H\delta_H^T].
\end{equation*}
Then, $S=\widetilde{PA}$ and $B=B_{PA}$ almost surely if and only if $
\mathbb{E}_{{X}, {Y} }[({Y} - B^T {X} )({Y} - B^T {X} )^T] = \Sigma .
$
\end{theorem}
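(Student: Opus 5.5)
We mirror the argument for Theorem \ref{Teorema 1}, replacing plain orthogonality with the corrected score equations that account for the confounding term $\Gamma_{PA}^T\delta_H$. Throughout, write $Y - B^TX = (B_{PA}-B)^TX + \Gamma_{PA}^T\delta_H + \delta_Y$. This uses \eqref{linear SCM} and the definitions $B_{PA}^T=\Lambda_{PA}^T+\Gamma_{PA}^T\Phi_{PA}^T$ and $\varepsilon_Y=\Gamma_{PA}^T\delta_H+\delta_Y$. We also use $\delta_Y \ci \{X,\delta_H\}$ and $\mathbb{E}[\delta_Y]=0$.

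\emph{Sufficiency} ($S=\widetilde{PA}$, $B=B_{PA}$). Substituting into the left-hand sides of the two score equations gives $\Gamma_{PA}^T\mathbb{E}[\delta_H X_S^T]+\mathbb{E}[\delta_Y X_S^T]$ and $\Gamma_{PA}^T\mathbb{E}[\delta_H\delta_H^T]+\mathbb{E}[\delta_Y\delta_H^T]$. The $\delta_Y$ cross terms vanish, so $B_{PA}$ solves the system. The residual is then exactly $\varepsilon_Y$, whose second moment is $\Sigma$ by \eqref{linear SCM with  HC}. For this direction we also check that the solution of the system is unique on the support pattern $S$. This holds if the joint second-moment matrix of $(X_S,\delta_H)$, restricted to the free coordinates of $B$, is nonsingular. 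We would state this as a nondegeneracy condition, which is where the ``almost surely'' qualifier enters: it holds for generic parameter values.

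\emph{Necessity.} Suppose $B$ has support $S$, solves the equations, and gives residual second moment $\Sigma$. Set $D=B_{PA}-B$, so the residual is $D^TX+\varepsilon_Y$. The key step is to expand
\[
\mathbb{E}[(D^TX+\varepsilon_Y)(D^TX+\varepsilon_Y)^T]=D^T\mathbb{E}[XX^T]D + D^T\mathbb{E}[X\varepsilon_Y^T]+\mathbb{E}[\varepsilon_Y X^T]D+\Sigma .
\]
We then use the score equations to kill the cross terms. Subtracting $\Gamma_{PA}^T\mathbb{E}[\delta_H X_S^T]$ from the first equation gives $\mathbb{E}[(D^TX+\delta_Y)X_S^T]=0$, that is, $D^T\mathbb{E}[XX_S^T]=0$. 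The second equation gives $D^T\mathbb{E}[X\delta_H^T]=0$. Since $\mathbb{E}[X\varepsilon_Y^T]=\mathbb{E}[X\delta_H^T]\Gamma_{PA}$, the cross terms reduce to $D^T\mathbb{E}[X\delta_H^T]\Gamma_{PA}$ and its transpose, and these vanish by the second equation. Equality to $\Sigma$ then forces $D^T\mathbb{E}[XX^T]D=0$. Because $\mathbb{E}[XX^T]$ is positive definite (nondegenerate predictors, e.g.\ induced by $\eta$ and $Z$), this gives $D=0$, so $B=B_{PA}$.

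Finally, compare supports. $B_{ij}\neq 0$ iff $i\in S_j$, and $(B_{PA})_{ij}\neq0$ iff $i\in\widetilde{PA}_j$. Equality of the matrices therefore gives $S_j=\widetilde{PA}_j$ for every $j$. The only exception is an accidental cancellation $\Lambda_{ij}+(\Gamma^T\Phi^T)_{ji}=0$ for $i\in\widetilde{PA}_j$, which is a measure-zero event in parameter space. This is the second source of ``almost surely''.

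The main obstacle is the cross-term cancellation in the necessity direction, which fails under ordinary orthogonality because $\varepsilon_Y \nci X$. The second score equation must be used carefully: as written it involves only $X_S$-free quantities, so we must verify that it yields $D^T\mathbb{E}[X\delta_H^T]=0$ for the full $X$ and not merely for $X_S$. We would argue this from the support of $D$: $D$ is supported on $S\cup\widetilde{PA}$, and we would either enlarge the equations to that index set or note that, by \eqref{linear SCM}, $X$ enters $\mathbb{E}[X\delta_H^T]$ only through the coordinates where $\delta_H$ correlates with $X$. If this step is delicate, we would restrict the argument to $S\supseteq\widetilde{PA}$ first and handle missing parents separately: omitting a true parent leaves an extra positive-semidefinite contribution, so the residual second moment strictly exceeds $\Sigma$.
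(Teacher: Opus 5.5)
Your argument is correct under the hypotheses you impose, and it takes a different route from the paper. The paper subtracts $\Gamma_{PA}^T\delta_H$ to form the deconfounded residual $Y-B^TX-\Gamma_{PA}^T\delta_H$. It then uses the second score equation to show this residual has second moment $\Omega$, and the first to show it is orthogonal to $X_S$. Finally it reruns the Bartlett-identity argument of Theorem~\ref{Teorema 1} to conclude $Y\mid X,\delta_H\sim N(B^TX+\Gamma_{PA}^T\delta_H,\Omega)$, followed by a measure-zero step.

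You instead expand the residual $R=Y-B^TX$ around $B_{PA}$ and let the second equation cancel the cross-covariance $D^T\mathbb{E}[X\delta_H^T]\Gamma_{PA}$, giving $\mathbb{E}[RR^T]=\Sigma+D^T\mathbb{E}[XX^T]D$. This gives an exact conclusion from an explicit positive-semidefinite excess. It also replaces the paper's step from the information-matrix equality to the conditional law with an elementary computation; that equality is necessary but not by itself sufficient for correct specification.

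Two side remarks. First, the worry in your last paragraph is unfounded: $Y-B^TX=D^TX+\varepsilon_Y$ holds identically, so the second equation gives $D^T\mathbb{E}[X\delta_H^T]=0$ for the full $X$, and no fallback is needed. Second, your necessity argument never uses the first score equation.

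That second remark points to a genuine gap relative to the paper's setting. You assume $\delta_Y\ci\{X,\delta_H\}$, but the paper only assumes $\delta_Y\ci\{X_{PA},H_{PA}\}$. Its candidate sets may contain descendants of $Y$, e.g.\ $X_6,X_7,X_8$ in both numerical examples. If $S$ contains one, then $\mathbb{E}[\delta_YX_S^T]\neq 0$ and the cross term $D^T\mathbb{E}[X\delta_Y^T]$ survives. Using the first equation and writing $T=\widetilde{PA}\setminus S$, one finds
\[
\mathbb{E}[RR^T]-\Sigma = D_T^T\,\mathbb{E}[X_TX_T^T]\,D_T - D_S^T\,\mathbb{E}[X_SX_S^T]\,D_S .
\]
This is a difference of positive-semidefinite matrices and can vanish with $D\neq 0$.

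Concretely, take no latent parents ($\Gamma_{PA}=0$), $Y=X_1+\delta_Y$, $X_2=Y+\eta$, $\mathbb{V}ar(X_1)=\mathbb{V}ar(\delta_Y)=1$, $\mathbb{V}ar(\eta)=2$, and $S=\{2\}$. The score equation gives coefficient $1/2$ on $X_2$. The residual variance is $2-2+1=1=\Sigma$, yet $S\neq\widetilde{PA}=\{1\}$.

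So for such $S$ the conclusion holds only generically, and the ``almost surely'' must be earned by a genericity argument over the parameters. Neither source of ``almost surely'' you name covers this; indeed, the uniqueness check in your sufficiency direction needs only $\mathbb{E}[X_SX_S^T]\succ 0$. As written, your proof establishes the theorem, in a stronger exact form, only for candidate sets with $\mathbb{E}[\delta_YX_S^T]=0$, i.e.\ non-descendants of $Y$. To cover the paper's setting you must either restrict the statement accordingly or supply that genericity step.
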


\begin{figure}
\begin{minipage}{0.45\textwidth}
\centering 
\begin{tikzpicture}[node distance={15mm}, thick, main/.style = {}] 
\node[main, text=red] (y1) {$Y_1$}; 
\node[main, text=red] (y2) [below of=y1]{$Y_2$}; 
\node[main, text=gray] (h2) [below left = 0.1cm and 0.3cm of y1] {$H_2$};
\draw[->, gray] (h2) -- (y1);
\draw[->, gray] (h2) -- (y2);
\node[main, text=gray] (h1) [right of=y1]{$H_1$};
\draw[->, gray] (h1) -- (y1);
\node[main] (x3) [right of=h1]{$X_3$};
\draw[->, gray] (x3) -- (h1);
\node[main, text=gray] (h3) [below of=h1]{$H_3$};
\node[main] (x2) [right of=h3]{$X_2$};
\draw[->, gray] (h3) -- (y2);
\node[main] (x1) [above = 0.25cm of x2]{$X_1$};
\draw[->] (x1) to[bend right=2] (y2);
\node[main, text=blue] (z1) [right = 0.25cm of x1]{$Z_1$}; 
\draw[->, blue] (z1) -- (x1);
\draw[->, gray] (x2) -- (h3);
\draw[->] (x1) -- (x2);
\node[main] (x5) [above = 0.5cm of h1]{$X_5$};
\draw[->] (x5) -- (y1);
\draw[->] (x1) -- (x2);
\node[main, text=blue] (z5) [above = 0.5cm of x3]{$Z_5$}; 
\draw[->, blue] (z5) -- (x5);
\node[main, text=gray] (h4) [above = 0.5cm of y1]{$H_4$};
\draw[->, gray] (h4) -- (x5);
\draw[->, gray] (h4) -- (y1);
\node[main] [left of=y1](x7) {$X_7$}; 
\node[main] [above = 0.5cm of x7](x6) {$X_6$};
\draw[->] (y1) -- (x7);
\draw[->] (y1) -- (x6);
\node[main] [left of=y2](x8) {$X_8$};
\draw[->] (y2) -- (x8);
\node[main] [below = 0.5cm of h3](x4) {$X_4$};
\draw[->, gray] (h3) -- (x4);
\draw[->] (x2) -- (x4);
\node[main, text= gray] [above = 0.25cm of h3](h5) {$H_5$};
\draw[->, gray] (h5) -- (x1);
\draw[->, gray] (h5) -- (y2);
\end{tikzpicture}\\
\centering (a)
\end{minipage}\hfill
\begin{minipage}{0.45\textwidth}
\centering 
\begin{tikzpicture}[node distance={15mm}, thick, main/.style = {}] 
\node[main, text=red] (y1) {$Y_1$}; 
\node[main, text=red] (y2) [below of=y1]{$Y_2$}; 
\node[main] (x2) [right of=y2]{$X_2$};
\draw[<->, red] (y1) -- (y2);
\node[main] (x3) [right of=y1]{$X_3$};
\draw[->] (x3) -- (y1);
\node[main] [below right = 0.5cm and 0.1cm of y2](x4) {$X_4$};
\draw[<->] (x4) -- (y2);
\node[main] (x1) [above = 0.25cm of x2]{$X_1$};
\node[main, text=blue] (z1) [right = 0.25cm of x1]{$Z_1$}; 
\draw[->, blue] (z1) -- (x1);
\draw[->] (x1) -- (y2);
\draw[<->] (x1) to[bend right=20] (y2);
\draw[->] (x2) -- (x4);
\draw[->] (x2) -- (y2);
\draw[->] (x1) -- (x2);
\node[main] (x5) [above =0.5cm of x3]{$X_5$};
\draw[->] (x5) -- (y1);
\node[main, text=blue] (z5) [right = 0.25cm of x5]{$Z_5$}; 
\draw[->, blue] (z5) -- (x5);
\draw[<->] (x5) to[bend right=20] (y1);
\node[main] [left of=y1](x7) {$X_7$}; 
\node[main] [above = 0.5cm of x7](x6) {$X_6$};
\draw[->] (y1) -- (x6);
\draw[->] (y1) -- (x7);
\node[main] [left of=y2](x8) {$X_8$};
\draw[->] (y2) -- (x8);
\end{tikzpicture}\\
\centering (b)
\end{minipage}
\caption{ (a) Generating DAG   with different  latent variable configurations; (b) induced mixed graph.}
\label{Final example}
\end{figure}

We provide a numerical illustration to show that by applying the result of Theorem \ref{Teorema2} to test for invariance in mixed graph models with confounding effects, the causal parents of a target variable  are effectively identified.
To this end, we consider the data generating model represented by the causal DAG in Figure   \ref{Final example}(a). The model includes a bivariate Gaussian target  $Y=\{Y_1,Y_2\}$, eight observed variables $X=\{X_1,\dots, X_8\}$ and  five latent variables $\{H_1,H_2,H_3,H_4, H_5\}$ serving as causal latent parents of $Y$. In particular,  the latent parents $H_4$ and $H_5$ act as hidden counfounders and the instrumental variables $Z_1$ and $Z_5$ are introduced to ensure the identifiability of the causal effects of $X_1$ on $Y_2$ and of $X_5$ on $Y_1$, respectively. The induced mixed graph over the observed variables, shown in Figure \ref{Final example}(b), includes additional bi-directed edges with no causal interpretation and the directed edges $X_3 \rightarrow Y_1$ and $X_2 \rightarrow Y_2$, so that $X_2$ and $X_3$ enter the augmented causal parent set of $Y_1,Y_2$. Data are sampled based on a data-generating model and from two environment mimic multiplicative intervention; for a detailed description  see Section \ref{app:numerical2} of the Appendix. 
Using the testing procedure from Theorem \ref{Teorema2} within the Monte Carlo simulation scheme described in Section \ref{sec.SCM}, the  parent set $\{X_1, X_2, X_3, X_5\}$ in Figure \ref{Final example}(b) is correctly identified in $92\%$ of the cases across $50$ replicates when $n=1000$ and using a 5\% significance level for the tests. 
On the other hand, the testing procedure based on the result of Theorem \ref{Teorema 1} identifies the correct causal parents only in  $18\%$ of the cases.

\section{Empirical illustration on flow cytometry data}
\label{app:application}
We consider the flow cytometry data of \cite{sachs2005}. These data measure the abundance of 11 biochemical agents in different environments. The first of them can be considered as an observational environment, while the others are generated under the intervention of a certain reagent that modifies the abundance or the action of the biochemical agents. 

We focus on the first observational environment and the one that corresponds to an abundance intervention on the agent PKC (due to the reagent PMA) in combination with a global intervention (see \cite{Mooji} and \cite{meinshausen2016methods} for further details). We concentrate on the bivariate target \{P38, Erk\}, as these nodes are at the same level according to the consensus network in \citet[][Fig. 2]{sachs2005}. 
Firstly, we search for the causal parents of the bivariate target among all the possible subsets of the 9 remaining covariates, namely Raf, Mek, Plcg, PIP2, PIP3, Akt, PKA, PKC and Jnk, without considering the presence of confounders in the causal effects. In other words, we apply the procedure resulting from Theorem \ref{Teorema 1} for parent identification. The set $\{\text{Jnk}, \text{PKC}, \text{Akt}\}$ is found as the parent set, with a p-value of $0.60$ from the Box's M test, and the causal model is represented by the graph  in Figure \ref{Application}a.

Suppose now that the variable Jnk is unobserved. As expected, the restricted set \{PKC, Akt\} no longer satisfies the invariance condition  required by  Theorem \ref{Teorema 1}. Given that  the consensus network shows a relationship between PKC and Jnk, it seems reasonable to assume that  Jnk  acts as a confounder of the effect of PKC  on P38.  
We therefore apply the procedure for parent identification  provided by Theorem \ref{Teorema2}, and employ PIP2, Plcg, Raf and Mek  as instrumental variables to ensure the identifiability of the causal effect of PKC on P38. This choice is further justified  by the consensus network, where these variables appear as  the most predictive of PKC.
In this second analysis, the invariance condition is  satisfied for the pair \{PKC, Akt\} with $p$-value equal to 0.55 and the structural causal model is represented by the mixed graph  in Figure \ref{Application}(b).

\begin{figure}[t]
\begin{minipage}{0.45\textwidth}
\centering 
\begin{tikzpicture}[node distance={15mm}, thick, main/.style = {}] 
\node[main, text=red] (p38) {P38}; 
\node[main, text=red] (erk) [below of=p38]{Erk}; 
\draw[<->, red] (p38) -- (erk);
\node[main] (pkc) [right = 1.5cm of p38]{PKC};
\node[main] (jnk) [below = 0.25cm  of pkc]{Jnk};
\draw[->] (jnk) -- (p38);
\draw[->] (pkc) -- (p38);
\node[main] (akt) [below of=pkc]{Akt};
\draw[->] (akt) -- (erk);
\draw[->] (pkc) -- (erk);
\draw[->] (jnk) -- (erk);
\draw[->] (akt) -- (p38);
\draw[->] (jnk) -- (p38);
\end{tikzpicture}\\
\centering (a)
\end{minipage}\hfill
\begin{minipage}{0.45\textwidth}
\centering
\begin{tikzpicture}[node distance={15mm}, thick, main/.style = {}] 
\node[main, text=red] (p38) {P38}; 
\node[main, text=red] (erk) [below of=p38]{Erk}; 
\draw[<->, red] (p38) -- (erk);
\node[main] (pkc) [right = 1.5cm of p38]{PKC};
\draw[->] (pkc) -- (p38);
\draw[<->] (pkc) to[bend right=20] (p38);
\node[main] (akt) [below of=pkc]{Akt};
\draw[->] (akt) -- (erk);
\draw[->] (akt) -- (p38);
\draw[->] (pkc) -- (erk);
\end{tikzpicture}\\
\centering (b)
\end{minipage}
\caption{Causal graphs found by the proposed procedure without (a) and with (b) confounding.}
\label{Application}
\end{figure}

\section{Conclusion}

 Testing causal invariance in the absence of causal sufficiency, and in particular under latent confounding, is challenging. Rather than imposing probabilistic assumptions on the latent structure, in this paper we approach this issue by working directly with the marginal distributions of the observed variables. By considering the structural model of the target variable in the induced acyclic directed mixed graph,   we  demonstrate  that causal invariance is preserved for the observed parent set under specific latent configurations. Then, we derive  the formal conditions for testing such invariance in the case of a multivariate Gaussian target.
 These results extend both the principle and the application of causal invariance to a broader class of models where hidden confounding is present. An interesting future research question is to investigate under what conditions testing  causal invariance is feasible for the full set of observed variables rather than for a specific target.

\section*{Acknowledgments}
Financial support was provided to the first and second author by the
MUR-PRIN grant 2022 SMNNKY, CUP B53D23009470006, the MUR Department of Excellence project 2023-2027 ReDS 'Rethinking Data Science' - Department of Statistics, Computer Science, Applications - University of Florence. 
Veronica Vinciotti acknowledges funding from the the European Union - Next Generation EU, Mission 4 Component 2 - CUP C53D23002580006 (MUR-PRIN grant 2022SMNNKY).

\appendix

\section*{Appendix}

\section{Proofs of the technical  results in Sections \ref{sec.SCM} and \ref{sec.SCMH}}\label{appendix A}

\subsection*{Proof of Proposition \ref{Proposition 1}}
\begin{proof}
We assume that data are collected from different environments and that the structural causal model \eqref{eq:SCM-transition-source} holds for every $e \in \mathcal{E}$.
We can rewrite the first equation as $Y = f(X_{PA}, H_{PA}, \delta_Y) = f(X_{PA}, g(X_{PA(H)}, \delta_H), \delta_Y)$.
Moreover, 
\[
\mathbb{E}_{\delta_H}[Y] = \int f(X_{PA}, g(X_{PA(H)}, \delta_H), \delta_Y) p(\delta_H) d \delta_H =: \tilde{f}(X_{\widetilde{PA}}, \delta_Y)
\]
where (i) $\tilde{f}$ is the same for all the environments because it  depends only on $f,g$, which are the same $\forall e \in \mathcal{E}$ thanks to the invariance of $Y \vert \{X_{PA}, H_{PA} \}$ and $H_{PA} \vert X_{PA(H)}$, and on the distribution of $\delta_H$, which does not change across the environments since no external interventions affect $H_{PA}$, and finally since $X_{PA} \ci \delta_H$. Moreover, (ii) $\delta_Y \ci X_{\widetilde{PA}}$ and (iii) the distribution of $\delta_Y$ is the same in all the environments for hypothesis.
\end{proof}

\subsection*{Proof of Theorem \ref{Teorema 1}}
\begin{proof}
The first implication is immediate because if $S=\widetilde{PA}$ and $B=B_{PA}$ with probability 1, then
\[
\mathbb{E}_{{X},{Y}}[({Y} - B^T{X})({Y} - B^T{X})^T] =
\mathbb{E}_{{X},{Y}}[({Y} - B_{PA}^T{X})({Y} - B_{PA}^T{X})^T] = \Sigma.
\]
We focus now on the other direction. First of all, the population log-likelihood function is $l_{{X},{Y}}(B) = -\frac{1}{2} \cdot \left( {Y}-B^T{X}\right)^T \Sigma^{-1} \left( {Y}-B^T{X}\right).$
Let ${b}:=vec(B)=(\beta_1, \dots, \beta_m)^T \in \mathbb{R}^{pm}$.
The column vector $B^T{X}$  can be written as $B^T{X}= vec(B^T {X})=vec({X}^TB)=(I_m \otimes {X})^Tvec(B)= (I_m \otimes {X})^T{b}$,
where $I_m$ is an identity matrix of size $m$. Following \citet[Thr.3]{alice}, the population log-likelihood function can be expressed as $l_{{X},{Y}}({b}) = -\frac{1}{2} \cdot \left( {Y}-(I_m \otimes {X})^T{b}\right)^T \Sigma^{-1} \left( {Y}-(I_m \otimes {X})^T{b}\right)$.
The first derivative with respect to ${b}$ is equal to $\frac{\partial l_{{X},{Y}}({b})}{\partial {b}} = (I_m \otimes {X}_S) \Sigma^{-1} \left( {Y}-(I_m \otimes {X})^T{b}\right)$
while the first derivative with respect to $B^T$ is equal to $\frac{\partial l_{{X},{Y}}({b})}{\partial {b}^T} = \left( {Y}-(I_m \otimes {X})^T{b}\right)^T \Sigma^{-1} (I_m \otimes {X}_S^T)$.
The second derivative of the population log-likelihood function is
$\frac{\partial^2 l_{{X},{Y}}({b})}{\partial {b} \partial {b}^T} = 
-(I_m \otimes {X}_S) \Sigma^{-1}(I_m \otimes {X}_S^T)$. 
From the invariance condition, we know that
$\mathbb{E}_{ {X}, {Y}}[( {Y} - B^T {X})( {Y} - B^T {X})^T] = \Sigma$.
The first term can be written as
\[
\mathbb{E}_{ {X}, {Y}}[( {Y} - B^T {X})( {Y} - B^T {X})^T] = \mathbb{E}_{ {X}_S}\left[ \mathbb{E}_{ {Y}}[( {Y} - B^T {X})( {Y} - B^T {X})^T \vert  {X}_S] \right]
\]
and hence $\mathbb{E}_{ {Y}}[( {Y} - B^T {X})( {Y} - B^T {X})^T \vert  {X}_S] = \Sigma$
holds almost surely on the space of distributions on $ {X}_S$. 
Multiplying  by $\Sigma^{-1}$ on both sides, we obtain
$
\mathbb{E}_{ {Y}}[\Sigma^{-1}( {Y} - B^T {X})( {Y} - B^T {X})^T \Sigma^{-1}\vert  {X}_S] = \Sigma^{-1}$.
Multiplying now by the Kronecker product on the left by $(I_m \otimes  {X}_S)$ and on the right by $(I_m \otimes  {X}_S^T)$, we get
\[
\mathbb{E}_{ {Y}}[(I_m \otimes  {X}_S)\Sigma^{-1}( {Y} - (I_m \otimes  {X}_S) {b})( {Y} - (I_m \otimes  {X}_S) {b})^T \Sigma^{-1}(I_m \otimes  {X}_S^T)\vert  {X}_S]  =(I_m \otimes  {X}_S) \Sigma^{-1}(I_m \otimes  {X}_S^T).
\]
Finally, by taking the expectation with respect to $ {X}_S$,  
$$
\mathbb{E}_{ {X}_S,  {Y}}\left[\frac{\partial l_{ {X}, {Y}}( {b})}{\partial  {b}} \cdot \frac{\partial l_{ {X}, {Y}}( {b})}{\partial  {b}^T}\right]
=
- \mathbb{E}_{ {X}_S}\left[\frac{\partial^2 l_{ {X}, {Y}}( {b})}{\partial  {b} \partial  {b}^T} \right].
$$
If we prove that $\mathbb{E}_{ {X}_S,  {Y}}\left[\frac{\partial l_{ {X}, {Y}}( {b})}{\partial  {b}} \right] =0,$
then the Bartlett's identity holds and thus $
 {Y} \; \vert  {X} \sim N(B^T {X},\Sigma)$.
This follows from the hypothesis $\mathbb{E}_{ {X}_{S}, {Y}}[( {Y} - B^T {X}) {X}_{S}^T]=0$. Indeed, since $\Sigma$ is symmetric and positive definite, if we multiply on the left by $\Sigma^{-1}$ both sides we obtain $\mathbb{E}_{ {X}_{S}, {Y}}[ \Sigma^{-1}( {Y} - B^T {X}) {X}_{S}^T]=0$.
Moreover,
\begin{align*}
& \mathbb{E}_{ {X}_{S}, {Y}}[ \Sigma^{-1}( {Y} - B^T {X}) {X}_{S}^T]=0 \Rightarrow \mathbb{E}_{ {X}_{S}, {Y}}[ {X}_{S} ( {Y} - B^T {X})^T \Sigma^{-1}]=0 \\
&\Rightarrow \mathbb{E}_{ {X}_{S}, {Y}}\left[ vec\left( {X}_{S} ( {Y} - B^T {X})^T \Sigma^{-1} \right) \right]=0 \Rightarrow \mathbb{E}_{ {X}_{S}, {Y}}\left[ (I_m \otimes  {X}_{S}) vec\left( ( {Y} - B^T {X})^T \Sigma^{-1} \right) \right]=0 \\
&\Rightarrow \mathbb{E}_{ {X}_{S}, {Y}}\left[ (I_m \otimes  {X}_{S}) vec\left(  \Sigma^{-1}( {Y} - B^T {X}) \right) \right]=0 \Rightarrow \mathbb{E}_{ {X}_{S}, {Y}}\left[ (I_m \otimes  {X}_{S})  \Sigma^{-1}( {Y} - B^T {X}) \right]=0\\
&\Rightarrow \mathbb{E}_{ {X}_S,  {Y}}\left[\frac{\partial l_{ {X}, {Y}}( {b})}{\partial  {b}} \right] =0.
\end{align*}
Since the set $\{B \in \mathbb{R}^{p \times m} : \mathbb{E}_{ {X}, {Y}}[( {Y} - B^T {X})( {Y} - B^T {X})^T] = \Sigma \}$ is a surface of dimension $pm-m$ in the space $\mathbb{R}^{pm}$, its Lebesgue measure in $\mathbb{R}^{pm}$ is zero. Therefore, $S=\widetilde{PA}$ and $B=B_{PA}$ almost surely.
\end{proof}

\subsection*{Proof of Proposition \ref{Proposition 2}}
\begin{proof}
We consider data coming from multiple environments and we assume that for all $e \in \mathcal{E}$ they are generated from the structural causal model \eqref{structural causal model}. 
As $H_{PA}= \delta_H$ is exogenous, we rewrite the first equation as $Y= f(X_{PA}) + g(\delta_H, \delta_Y) = f(X_{PA}) + \varepsilon_Y$
by defining $\varepsilon_Y := g(\delta_H, \delta_Y)$. Then, (i) $f$ is the same across the environments thanks to the invariance of the distribution of $Y \vert \{ X_{PA}, H_{PA} \}$. Moreover, (ii) the distribution of $\varepsilon_Y$ is the same for all $e \in \mathcal{E}$. Indeed, the distribution of $\varepsilon_Y$ is completely determined by $g$, which is the same across the environments for hypothesis and by $\delta_H$, which does not change because the external interventions are not performed on $H_{PA}$. Moreover, the marginal distribution of $\varepsilon_Y$ over $\delta_H$ is the same for all $e \in \mathcal{E}$ because $\delta_Y \ci \delta_H$. 
\end{proof}

\subsection*{Proof of Proposition \ref{prop:ancestral}}
\begin{proof}
We show a counter example. We consider a generic linear structural causal model associated to the DAG in Figure \ref{fig:DAG-latent}(d) where $H$ is an unobserved counfounder between $X$ and $Y$
\begin{equation}\label{SCM 4 variables}
Z = \delta_Z, \quad  H = \delta_H, \quad X = \lambda_{xz}Z + \gamma_{xh}H + \delta_X, \quad
Y = \lambda_{yx}X + \gamma_{yh}H + \delta_Y
\end{equation}
where $(\delta_Z, \delta_H, \delta_X, \delta_Y) \sim N( {0}, \Omega)$ with $\Omega = diag(\omega_{zz}, \omega_{hh}, \omega_{xx}, \omega_{yy})$. We assume data are collected from two environments $e$ and $e'$, where $\omega_{xx}^{e} \not = \omega_{xx}^{e'}$, while the distributions of $H$ and of $Y \vert X, H$ remain the same. The structural equation for $Y$ in the induced ancestral graph model is
$$
Y = \beta_{yz \vert x}Z + \beta_{yx \vert z}X + \varepsilon_Y
$$
for some coefficients $\beta_{yz \vert x}$ and $\beta_{yx \vert z}$ and error term $\varepsilon_Y$. But $\beta_{yx \vert z}= \frac{\mathbb{C}ov(X,Y \vert Z)}{\mathbb{V}ar(X \vert Z)} = \lambda_{yx} + \frac{\gamma_{xh}\gamma_{yh}\omega_{hh}}{\gamma_{xh}^2\omega_{hh} + \omega_{xx}}$ and since we have assumed that the variance $\omega_{xx}$ is different in the two environments (while all the other quantities remain the same), then $\beta_{yx \vert z}^{e} \not = \beta_{yx \vert z}^{e'}$ so the invariance condition fails (as well as the weak invariance condition). It is easy to show that also the coefficient $\beta_{yz \vert x}$ and the variance of $\varepsilon_Y$ change in the two environments.
\end{proof}
For sake of completeness,  we discuss that if we consider the induced mixed graph for the counter example used in the proof of Proposition \ref{prop:ancestral}, the  model for $Y$ satisfies the invariance principle, more specifically, the conditional distribution of $Y \vert X$ is weakly invariant under the mixed graph model. Indeed, we assume data are collected from two environments $e,e'$ and for both scenarios data are generated according to the structural causal model (\ref{SCM 4 variables}). Then, $Y = \lambda_{yx}X + \gamma_{yh}\delta_H + \delta_Y = \lambda_{yx}X + \varepsilon_Y$, where $\varepsilon_Y= \gamma_{yh}\delta_H + \delta_Y$. The mean of $\varepsilon_Y$ is equal to zero in both $e$ and $e'$.   Its variance is $\mathbb{V}ar(\varepsilon_Y) = \gamma_{yh}^2\omega_{hh} + \omega_{yy}$.
Therefore, if we do not intervene on $Y$ and on $H$, the distribution of $\varepsilon_Y$ remains the same across the environments. Hence we can conclude that the distribution of $Y \vert X$ is weakly invariant. 

\subsection*{Proof of Proposition \ref{prop:endogenousHC}}
\begin{proof}
We assume that data are generated from different environments and that for each $e \in \mathcal{E}$ the causal model \eqref{structural causal model version 2} is satisfied.
We can write
\[
Y = f(X_{PA}) + \Gamma_{PA}^Tg(X_{PA(H)}) + \Gamma_{PA}^T \delta_H + \delta_Y = f(X_{PA}) + \Gamma_{PA}^Tg(X_{PA(H)}) + \varepsilon_Y
\]
where we have defined $\varepsilon_Y := \Gamma_{PA}^T\delta_H + \delta_Y$. Then, thanks to the hypothesis, we have that (i) $f(X_{PA}) + \Gamma_{PA}^Tg(X_{PA(H)})$ is the same for all $e \in \mathcal{E}$. Moreover, by repeating the same argument as in the proof of Proposition \ref{Proposition 2}, also (ii) the distribution of $\varepsilon_Y$ is the same across the environments as well as its marginal distribution with respect to $\delta_H$.
\end{proof}

\subsection*{Proof of Theorem \ref{Teorema2}}
\begin{proof}
The first implication is trivial, so we consider the other one. 
From the hypothesis we know that $\mathbb{E}_{ {X},  {Y} }[( {Y} - B^T  {X} ) ( {Y} - B^T  {X} )^T] = \Sigma =\Gamma_{PA}^T\mathbb{E}_{\delta_H}[\delta_H \delta_H^T] \Gamma_{PA} + \Omega.$
Therefore,
\[
\begin{split}
\mathbb{E}&_{ {X},  {Y}, \delta_H }[( {Y} - B^T  {X} -\Gamma_{PA}^T\delta_H) ( {Y} - B^T  {X} -\Gamma_{PA}^T\delta_H)^T]=\\
&= \mathbb{E}_{ {X},  {Y} }[( {Y} - B^T  {X} ) ( {Y} - B^T  {X})^T] -\mathbb{E}_{ {X},  {Y}, \delta_H }[( {Y} - B^T  {X} )\delta_H^T]\Gamma_{PA} 
-\Gamma_{PA}^T\mathbb{E}_{ {X},  {Y}, \delta_H }[\delta_H( {Y} - B^T  {X} )^T]+ \\
&+ \Gamma_{PA}^T\mathbb{E}_{\delta_H}[\delta_H\delta_H^T]\Gamma_{PA} = \Omega.
\end{split}
\]
We rewrite the first population likelihood score equation as 
$
\mathbb{E}_{ {X}_{S },  {Y}, \delta_H}[( {Y} - B^T  {X} - \Gamma_{PA}^T \delta_H)  {X}_S^T]=   {0}.
$
Then, by repeating the same argument as in the proof of Theorem 1 with the last two results we obtain that $ {Y} \vert  {X}, \delta_H \sim N\left(B^T  {X} + \Gamma_{PA}^T\delta_H, \Omega \right)$. 
Hence we get $ {Y} = B^T  {X} + \Gamma_{PA}^T\delta_H + \delta_Y = B^T  {X} + \varepsilon_{ {Y}}$ with probability one.
This thus implies that $S=\widetilde{PA}$ and $B=B_{PA}$ almost surely.
\end{proof}

\section{Numerical examples}\label{app:numerical}
This section describes the generating models of the numerical examples in Sections \ref{sec.SCM} and \ref{sec.SCMH}.
\subsection{Numerical example in Section \ref{sec.SCM}}\label{app:numerical1}
Given the causal  DAG  in Figure~\ref{Transition and source node}(a), data are generated from the following structural causal model
\begin{align*}
X_1^e &= \delta_{X_1}^e & X_2^e &= \lambda_{X_2,X_1} + \delta_{X_2}^e \\
X_3^e &= \delta_{X_3}^e & H_3^e &= \lambda_{H_3, X_2} X_2^e + \delta_{H_3}^e \\
X_4^e &= \lambda_{X_4,X_2}X_2^e + \gamma_{X_4, H_3}H_3^e + \delta_{X_4}^e & X_5^e &= \lambda_{X_5,X_1}X_1^e + \delta_{X_5}^e \\
H_1^e &= \lambda_{H_1,X_3}X_3^e + \delta_{H_1}^e & H_2^e &= \delta_{H_2}^e \\
X_6^e &= \lambda_{X_6, Y_1}Y_1^e + \delta_{X_6}^e & X_7^e &= \lambda_{X_7, Y_1}Y_1^e + \delta_{X_7}^e \\
X_8^e &= \lambda_{X_8, Y_2}Y_2^e + \delta_{X_8}^e & Y_1^e &= \lambda_{Y_1, X_5} X_5^e + \gamma_{Y_1,H_1}H_1^e + \gamma_{Y_1,H_2}H_2^e + \delta_{Y_1}^e \\
Y_2^e &= \lambda_{Y_2,X_1}X_1^e + \gamma_{Y_2,H_2}H_2^e + \gamma_{Y_2,H_3}H_3^e + \delta_{Y_2}^e
\end{align*}
where the error terms $\delta_{Y_i}^e, \delta_{X_j}^e, \delta_{H_r}^e \sim N(0,1)$ are all mutually independent and the regression coefficients are set to $\lambda_{X_2,X_1}=1$, $\lambda_{H_3, X_2}=2$,  $\lambda_{X_4,X_2}=-2.1$, $\gamma_{X_4, H_3}=-1.7$, $\lambda_{X_5,X_1} = 1.5$, $\lambda_{H_1,X_3}=2$, $\lambda_{Y_1, X_5}=-2.3$, $\gamma_{Y_1,H_1} = 1.5$, $\gamma_{Y_1,H_2} =-0.9$, $\lambda_{Y_2,X_1}=3.2$, $\gamma_{Y_2,H_2}=-1.3$, $\gamma_{Y_2,H_3}=2.1$, $\lambda_{X_6, Y_1}=3.3$, $\lambda_{X_7, Y_1} = 2.4$, $\lambda_{X_8, Y_2}=-1.6$.
Data are simulated from an observational environment ($e=1$) and from an interventional one $(e=2)$. 
We consider multiplicative interventions, i.e., we set $\delta_{X_j}^{e=2} = A_j\delta_{X_j}^{e=1}$ for some random variables $A_j \sim N(2,1)$ which are mutually independent and also independent of all the error terms. 
We remark that we work under the assumption that the target and the latent variables are not subject to external interventions.

The induced mixed graph obtained after marginalizing over the latent variables is represented in Figure~\ref{Transition and source node}(a) and, from Proposition \ref{Proposition 1}, the related causal equations for $Y_1$ and $Y_2$ become
\begin{equation*}\label{scm Y1 Y2}
Y_1^e = \beta_{Y_1,X_3}X_3^e + \beta_{Y_1, X_5}X_5^e + \varepsilon_{Y_1}^e, \qquad
Y_2^e = \beta_{Y_2,X_1}X_1^e + \beta_{Y_2, X_2}X_2^e + \varepsilon_{Y_2}^e
\end{equation*}
where $\beta_{Y_1,X_3}= \lambda_{H_1, X_3} \gamma_{Y_1, H_1} =3$, $\beta_{Y_1, X_5}= \lambda_{Y_1,X_5}=-2.3$,  $\beta_{Y_2,X_1}=\lambda_{Y_2,X_1}=3.2$, $\beta_{Y_2, X_2}= \lambda_{H_3, X_2} \gamma_{Y_2, H_3}= 4.2$, $\varepsilon_{Y_1}^e = \gamma_{Y_1,H_1}\delta_{H_1}^e +\gamma_{Y_1,H_2}\delta_{H_2}^e + \delta_{Y_1}^e$ and $\varepsilon_{Y_2}=\gamma_{Y_2,H_2}\delta_{H_2}^e + \gamma_{Y_2,H_3}\delta_{H_3}^e + \delta_{Y_2}^e$. Therefore, we have $\varepsilon_{Y_1}^e\sim N(0, 4.06)$, $\varepsilon_{Y_2}^e \sim N(0, 7.1)$ and $\mathbb{C}ov(\varepsilon_{Y_1} \varepsilon_{Y_2})= 1.17$.

\subsection{Numerical example in Section \ref{sec.SCMH}}\label{app:numerical2}
In this section we consider the generating DAG in Figure~\ref{Final example}(a) that includes different types of latent variables. The aim is to find the causal parents of the multivariate target $\{Y_1,Y_2\}$ by using the more general result in Theorem \ref{Teorema2}. This procedure needs instrumental variables in order to overcome the possibility of non-identifiability of the model. We assume to have prior knowledge that the causal effects from $X_1$ and from $X_5$ can be confounded.
We therefore consider the instrumental variables $Z_1$ and $Z_5$ for the predictors $X_1$ and $X_5$, respectively (see Figure \ref{Final example}). The associated causal model is
\begin{align*}
Z_1^e &= \delta_{Z_1}^e, \; Z_5^e = \delta_{Z_5}^e  &
H_r^e &= \delta_{H_r}^e, \text{ for } r \in \{2,4,5\} \\
X_1^e &= \lambda_{X_1,Z_1}Z_1^e + \gamma_{X_1,H_5}H_5^e + \delta_{X_1}^e & X_2^e &= \lambda_{X_2, X_1}X_1^e + \delta_{X_2}^e  \\
X_3^e &= \delta_{X_3}^e &
H_3^e &= \lambda_{H_3,X_2}X_2^e + \delta_{H_3}^e \\
X_4^e &= \lambda_{X_4,X_2}X_2^e + \gamma_{X_4,H_3}H_3^e + \delta_{X_4}^e & X_5^e &= \lambda_{X_5,Z_5}Z_5^e + \gamma_{X_5,H_4}H_4^e + \delta_{X_5}^e \\
H_1^e &= \lambda_{H_1,X_3}X_3^e + \delta_{H_1}^e &
X_6^e &= \lambda_{X_6, Y_1}Y_1^e  +  \delta_{X_6}^e \\
X_7^e &= \lambda_{X_7, Y_1}Y_1^e + \delta_{X_7}^e &
X_8^e &= \lambda_{X_8, Y_2}Y_2^e  + \delta_{X_8}^e \\
Y_1^e &= \lambda_{Y_1,X_5}X_5^e + \gamma_{Y_1,H_1}H_1^e + \gamma_{Y_1,H_2}H_2^e + \gamma_{Y_1,H_4}H_4^e + \delta_{Y_1}^e \\
Y_2^e &= \lambda_{Y_2,X_1}X_1^e + \gamma_{Y_2,H_2}H_2^e + \gamma_{Y_2,H_3}H_3^e + \gamma_{Y_2,H_5}H_5^e + \delta_{Y_2}^e  \\
\end{align*}
where $\lambda_{X_1, Z_1} = 1.5$, $\gamma_{X_1,H_5}=2.1$, $\lambda_{X_2, X_1}= 1$, $\lambda_{X_4,X_2}=-2.1$, $\gamma_{X_4,H_3} = -1.7$, 
$\lambda_{X_5,Z_5}=1.5$,
$\gamma_{X_5,H_4}=1.9$,
$\lambda_{H_1,X_3}=2$,
$\lambda_{H_3,X_2}=2$,
$\lambda_{Y_1,X_5}=-2.3$,
$\gamma_{Y_1,H_1}=1.5$,
$\gamma_{Y_1,H_2}=-0.9$,
$\gamma_{Y_1,H_4}=-1.7$,
$\lambda_{Y_2,X_1}=3.2$,
$\gamma_{Y_2,H_2}=-1.3$,
$\gamma_{Y_2,H_3} = 2.1$,
$\gamma_{Y_2, H_5}=2.6$,
$\lambda_{X_6, Y_1}=3.3$,
$\lambda_{X_7, Y_1}=2.4$,
$\lambda_{X_8, Y_2}=-1.6$.
As before, we consider multiplicative interventions. In particular, we set $\delta_{X_j}^{e=2} = A_j\delta_{X_j}^{e=1}$ for some random variables $A_j \sim N(2,2)$ which are mutually independent and also independent of all the error terms. 
The induced mixed graph obtained after marginalizing over the latent variables is shown in Figure \ref{Final example}(b) and, from Proposition \ref{prop:endogenousHC}, the causal equations for $Y_1$ and $Y_2$ become
\[
Y_1^e = \beta_{Y_1,X_3}X_3^e + \beta_{Y_1,X_5}X_5^e + \varepsilon_{Y_1}^e, \qquad Y_2^e = \beta_{Y_2,X_1}X_1^e + \beta_{Y_2,X_2}X_2^e + \varepsilon_{Y_2}^e
\]
where $\beta_{Y_1,X_3}= \gamma_{Y_1,H_1}\lambda_{H_1,X_3}=3$, $\beta_{Y_1,X_5}= \lambda_{Y_1,X_5}=-2.3$, $\beta_{Y_2,X_1}= \lambda_{Y_2,X_1}=3.2$, $\beta_{Y_2, X_2}=\gamma_{Y_2,H_3}\lambda_{H_3,X_2}=4.2$,
$\varepsilon_{Y_1}^e = \gamma_{Y_1,H_1}\delta_{H_1}^e + \gamma_{Y_1,H_2}\delta_{H_2}^e + \gamma_{Y_1,H_4}\delta_{H_4}^e + \delta_{Y_1}^e$ and 
$\varepsilon_{Y_2}^e=\gamma_{Y_2,H_2}\delta_{H_2}^e + \gamma_{Y_2,H_3}\delta_{H_3}^e + \gamma_{Y_2, H_5}\delta_{H_5}^e + \delta_{Y_2}^e$. Therefore, we have $\varepsilon_{Y_1}^e\sim N(0, 6.95)$, $\varepsilon_{Y_2}^e \sim N(0, 13.86)$ and $\mathbb{C}ov(\varepsilon_{Y_1} \varepsilon_{Y_2})= 1.17$. In particular, $\varepsilon_{Y_1} \nci X_5$ for the presence of the hidden confounder $H_4$ and
$\varepsilon_{Y_2} \nci X_1$ for the presence of the hidden confounder $H_5$.

\bibliographystyle{unsrtnat}
\bibliography{references}  

@article{henzi25,
    author = {Henzi, Alexander and Shen, Xinwei and Law, Michael and Bühlmann, Peter},
    title = {Invariant probabilistic prediction},
    journal = {Biometrika},
    volume = {112},
    issue = {1},
    pages = {doi:10.1093/biomet/asae063},
    year = {2025},
    doi = { 10.1093/biomet/asae063},
    url = {https://doi.org/10.1093/biomet/asae063}
 }

@Article{Zhang,
 Author = {Zhang, Jiji},
 Title = {Causal reasoning with ancestral graphs},
 Journal = {Journal of Machine Learning Research},
 ISSN = {1532-4435},
 Volume = {9},
 Pages = {1437--1474},
 Year = {2008},
 Language = {English},
 URL = {www.jmlr.org/papers/v9/zhang08a.html},
 zbMATH = {5968910},
 Zbl = {1225.68254}
}

@article{kook25,
author = {Lucas Kook and Sorawit Saengkyongam and Anton Rask Lundborg and Torsten Hothorn and Jonas Peters},
title = {Model-Based Causal Feature Selection for General Response Types},
journal = {Journal of the American Statistical Association},
volume = {120},
number = {550},
pages = {1090--1101},
year = {2025},
doi = {10.1080/01621459.2024.2395588}
}

@article{peters,
  title={Causal inference by using invariant prediction: identification and confidence intervals},
  author={Peters, Jonas and B{\"u}hlmann, Peter and Meinshausen, Nicolai},
  journal={Journal of the Royal Statistical Society Series B: Statistical Methodology},
  volume={78},
  number={5},
  pages={947--1012},
  year={2016},
  publisher={Oxford University Press}
}

@article{alice,
      title={Causal generalized linear models via {Pearson} risk invariance.}, 
      author={Alice Polinelli and Veronica Vinciotti and Ernst C. Wit},
      journal={Journal of Causal Inference},
volume = {},
pages = {},
year = {2026}
}

@article{rothenhausler2019causal,
  title={Causal {Dantzig}},
  author={Rothenh{\"a}usler, Dominik and B{\"u}hlmann, Peter and Meinshausen, Nicolai},
  journal={The Annals of Statistics},
  volume={47},
  number={3},
  pages={1688--1722},
  year={2019},
  publisher={JSTOR}
}

@book{pearl2009causality,
  title={Causality: Models, Reasoning and Inference},
  author={Pearl, Judea},
  year={2009},
  publisher={New York: Cambridge University Press}
}

@article{richardson2003,
  title={Markov properties for acyclic directed mixed graphs},
  author={Richardson, Thomas},
  journal={Scandinavian Journal of Statistics},
  volume={30},
  number={1},
  pages={145--157},
  year={2003},
  publisher={Wiley Online Library}
}

@article{angrist1996,
  title={Identification of causal effects using instrumental variables},
  author={Angrist, Joshua D and Imbens, Guido W and Rubin, Donald B},
  journal={Journal of the American statistical Association},
  volume={91},
  number={434},
  pages={444--455},
  year={1996},
  publisher={Taylor \& Francis}
}

@article{sachs2005,
  title={Causal protein-signaling networks derived from multiparameter single-cell data},
  author={Sachs, Karen and Perez, Omar and Pe'er, Dana and Lauffenburger, Douglas A and Nolan, Garry P},
  journal={Science},
  volume={308},
  number={5721},
  pages={523--529},
  year={2005},
  publisher={American Association for the Advancement of Science}
}

@article{meinshausen2016methods,
  title={Methods for causal inference from gene perturbation experiments and validation},
  author={Meinshausen, Nicolai and Hauser, Alain and Mooij, Joris M and Peters, Jonas and Versteeg, Philip and B{\"u}hlmann, Peter},
  journal={Proceedings of the National Academy of Sciences},
  volume={113},
  number={27},
  pages={7361--7368},
  year={2016},
  publisher={National Academy of Sciences}
}

@article{long2023estimating,
  title={Estimating causal effects with hidden confounding using instrumental variables and environments},
  author={Long, James P and Zhu, Hongxu and Do, Kim-Anh and Ha, Min Jin},
  journal={Electronic Journal of Statistics},
  volume={17},
  number={2},
  pages={2849},
  year={2023}
}

@article{rothenhausler2021anchor,
  title={Anchor regression: Heterogeneous data meet causality},
  author={Rothenh{\"a}usler, Dominik and Meinshausen, Nicolai and B{\"u}hlmann, Peter and Peters, Jonas},
  journal={Journal of the Royal Statistical Society Series B: Statistical Methodology},
  volume={83},
  number={2},
  pages={215--246},
  year={2021},
  publisher={Oxford University Press}
}

@article{Mooji,
author = {Mooij, Joris and Heskes, Tom},
year = {2013},
month = {09},
pages = {},
title = {Cyclic Causal Discovery from Continuous Equilibrium Data},
journal = {Uncertainty in Artificial Intelligence - Proceedings of the 29th Conference, UAI 2013}
}

@article{SadLau-2014,
author = {Sadeghi, K. and Lauritzen, S.},
  volume={20},
  number={2},
  pages={676--696},
year = {2014},
title = {Markov properties for mixed graphs},
journal = {Bernoulli}
}

@article{henckel-al-2024,
author = {Henckel, L. and Buttenschoen, M. and Maathuis, M.},
  volume={111},
  number={3},
  pages={771--788},
year = {2024},
title = {Graphical tools for selecting conditional instrumental sets},
journal = {Biometrika}
}

@article{gnecco2026boosted,
  title={Boosted Control Functions: Distribution generalization and invariance in confounded models},
  author={Gnecco, Nicola and Peters, Jonas and Engelke, Sebastian and Pfister, Niklas},
  journal={Journal of Machine Learning Research},
  volume={27},
  number={46},
  pages={1--57},
  year={2026},
  url={https://jmlr.org}
}

\end{document}